\documentclass[11pt]{article}

\usepackage{lmodern}
\usepackage[utf8]{inputenc}
\usepackage{amsmath,amssymb,mathrsfs,mathtools}
\usepackage{enumitem}
\usepackage[dvipsnames]{xcolor}
\usepackage{tikz}
\usetikzlibrary{positioning}
\usepackage[margin=1in]{geometry}
\usepackage{microtype}
\usepackage[colorlinks=true,urlcolor=Blue,citecolor=Green,linkcolor=BrickRed]{hyperref}
\usepackage{aliascnt}
\usepackage{cleveref}
\allowdisplaybreaks

\usepackage{yhmath}

\makeatletter
\def\@begintheorem#1#2{\trivlist
   \item[\hskip\labelsep{\bfseries #1\ #2.}]\itshape}
\def\@opargbegintheorem#1#2#3{\trivlist
   \item[\hskip\labelsep{\bfseries #1\ #2\ (#3).}]\itshape}
\makeatother

\newenvironment{proof}[1][Proof]{\par\noindent\textbf{#1.}\quad}{\hfill$\square$\par\medskip}

\newcommand{\kw}[1]{\textbf{#1}}
\newcommand{\algin}{\hspace*{1.4em}}
\newcommand{\algcmt}[1]{{\footnotesize\color{gray}$\vartriangleright$ #1}}
\newcommand{\algline}[3]{#1 & #2 & #3\\}

\newenvironment{algo}
{\par\medskip\noindent\begin{minipage}{\linewidth}
 \hrule height .45pt
 \vspace{.45em}
 \small}
{\vspace{.2em}
 \hrule height .45pt
 \end{minipage}
 \par\medskip}

\newcommand{\thup}[0]{\mathbin {\begin{tikzpicture}[scale=0.1]  \draw[->,thick] (0, 1.5 ) -- (2,1.5);     \draw[->,thick] (0,1.5) -- (0,3.5);\end{tikzpicture} } }
\newcommand{\thdown}[0]{\mathbin{\begin{tikzpicture}[scale=0.1]  \draw[->,thick] (0, 1.5 ) -- (2,1.5);     \draw[->,thick] (0,1.5) -- (0,-0.5);\end{tikzpicture} } }
\newcommand{\mitsuup}[0]{\forall 3\hspace{2pt} \thup}
\newcommand{\mitsudown}[0]{\forall 3\hspace{2pt} \thdown}

\newcommand{\eo}{\mathrm{EO}}

\newcommand{\ceo}{\#\mathrm{EO}}

\newcommand{\pnp}{\mathrm{FP}^{\mathrm{NP}}}
\newcommand{\eoe}{\mathrm{HW}^{=}}

\newcommand{\eosg}{\mathrm{HW}^{>}}
\newcommand{\eosl}{\mathrm{HW}^{<}}
\newcommand{\su}{\mathcal{S}}
\newcommand{\eom}[1][\mathrm{M}]{\mathrm{EO}^{#1}}
\newcommand{\Ftwo}{\mathbb{F}_2}
\newcommand{\Aclass}{\mathscr{A}}
\newcommand{\Pclass}{\mathscr{P}}

\newcommand{\POLMID}{\mathsf{Pol}(\oplus_3)}
\newcommand{\POLUP}{\mathsf{Pol}^\uparrow(\oplus_3)}
\newcommand{\POLDOWN}{\mathsf{Pol}^\downarrow(\oplus_3)}

\DeclareMathOperator{\wt}{wt}

\newtheorem{theorem}{Theorem}[section]

\newaliascnt{lemma}{theorem}
\newtheorem{lemma}[lemma]{Lemma}
\aliascntresetthe{lemma}

\newaliascnt{proposition}{theorem}
\newtheorem{proposition}[proposition]{Proposition}
\aliascntresetthe{proposition}

\newaliascnt{corollary}{theorem}
\newtheorem{corollary}[corollary]{Corollary}
\aliascntresetthe{corollary}

\newaliascnt{definition}{theorem}
\newtheorem{definition}[definition]{Definition}
\aliascntresetthe{definition}

\newaliascnt{remark}{theorem}
\newtheorem{remark}[remark]{Remark}
\aliascntresetthe{remark}

\crefname{theorem}{Theorem}{Theorems}
\Crefname{theorem}{Theorem}{Theorems}

\crefname{lemma}{Lemma}{Lemmas}
\Crefname{lemma}{Lemma}{Lemmas}

\crefname{proposition}{Proposition}{Propositions}
\Crefname{proposition}{Proposition}{Propositions}

\crefname{corollary}{Corollary}{Corollaries}
\Crefname{corollary}{Corollary}{Corollaries}

\crefname{definition}{Definition}{Definitions}
\Crefname{definition}{Definition}{Definitions}

\crefname{remark}{Remark}{Remarks}
\Crefname{remark}{Remark}{Remarks}

\newcounter{myalgorithm}
\renewcommand{\themyalgorithm}{\arabic{myalgorithm}}

\crefname{myalgorithm}{Algorithm}{Algorithms}
\Crefname{myalgorithm}{Algorithm}{Algorithms}

\title{\bf An LP Algorithm for Counting Eulerian Orientations Through the Lens of Quasi-polymorphism}
 \author{
 \begin{tabular}{c@{\hspace{0.5cm}}c@{\hspace{0.5cm}}c}
 \normalsize Jincheng Guan &
\normalsize  Shuai Shao &
\normalsize  Ke Shi \\
 \normalsize  \url{guanjincheng@mail.ustc.edu.cn} & \normalsize \url{shao10@ustc.edu.cn}  & \normalsize \url{self.ke.shi@gmail.com}\\
 \end{tabular}
 \\[0.5cm]
\normalsize School of Computer Science and Technology \& Hefei National Laboratory\\
\normalsize  University of Science and Technology of China}
 \date{}
\begin{document}
\maketitle
\thispagestyle{empty}

\begin{abstract}
The weighted Eulerian orientation counting problem ($\ceo$) plays a key role in the complexity classification program for Holant problems.
A recent result~\cite{meng2025fpnp} established an $\pnp$ versus $\#\mathrm P$-hard dichotomy for $\ceo$ problems. The tractable side of this dichotomy can be characterized by functions admitting quasi-polymorphisms of the ternary XOR operation, leaving open whether these cases on the $\pnp$ side are in fact in FP.
In this paper, we settle this question by giving a polynomial-time algorithm for all cases on the $\pnp$ side. Consequently, we obtain a complete FP versus $\#\mathrm P$ dichotomy for counting weighted Eulerian orientations, and further for complex-valued Holant problems with an odd-arity signature.

Our algorithm is based on a linear programming relaxation, but we use it in a nonstandard way. Instead of proving that the relaxation is integral and solving the problem directly from an optimal LP solution, we use the relaxation as a structural tool to lift the quasi-polymorphism condition to an ordinary polymorphism condition. This reveals an affine local structure of the constraint functions, which leads to tractability.

\end{abstract}

\newpage

\section{Introduction}

Given an undirected graph $G$, an \emph{Eulerian orientation} of $G$ is an assignment of a direction to every edge such that, at each vertex, the number of incoming edges equals the number of outgoing edges. 
A graph admits an Eulerian orientation if and only if every vertex has even degree, by the classical Euler theorem. 
Thus, deciding whether a graph admits an Eulerian orientation is straightforward.
The corresponding counting problem is substantially more difficult. Mihail and Winkler showed that counting the Eulerian orientations of a given graph is \(\#\mathrm{P}\)-complete~\cite{mihail1992ontn}. 
However, the problem may become tractable when suitable local restrictions are imposed on the orientations of edges. An intriguing example arises from the partition function of the six-vertex model~\cite{pauling1935structure,slater1941theory,rys1963uber}, one of the most extensively studied models in statistical physics. In this example, the underlying graph is $4$-regular, and at each vertex one incident edge is restricted to be oriented toward the vertex. Counting Eulerian orientations satisfying these local restrictions can be done in polynomial time~\cite{cai2018complexity}. More generally, this type of restriction on the orientations of edges can be represented by constraint functions placed at the vertices.

Motivated by the question of which local constraint functions give rise to tractable counting problems, the framework of counting weighted Eulerian orientations, denoted by $\#\mathrm{EO}$, was introduced in~\cite{cai2020beyond}. In a $\#\mathrm{EO}$ instance, each vertex $v$ is assigned a constraint function $f_v$, also called a \emph{signature}, whose value determines the local contribution at $v$ given an orientation. 
Since each edge has two ends, an  orientation of an edge is  denoted by assigning 
$0$ to the head and $1$ to the tail.
Thus, for an edge $e=(u, v)$, its two possible orientations are represented by $(0, 1)$ or $(1, 0)$, and 
locally at every vertex $v$,  a
$0$ represents an incoming edge and a $1$ represents an outgoing edge. 
An Eulerian orientation corresponds to
an assignment to each edge ($01$ or $10$) where the numbers of 0's and 1's at
each $v$ are equal. 
Then, in a \#EO instance, the signatures are supported on inputs of Hamming weight exactly half of their arity.
A complexity dichotomy for this framework was first obtained under a physical symmetry known as \emph{arrow reversal symmetry} (ARS)~\cite{cai2020beyond}. 
Such a dichotomy classifies every problem in the framework as either polynomial-time solvable (tractable) or $\#\mathrm{P}$-hard, leaving no intermediate cases within the classification. 
In this paper, we remove the ARS assumption and consider $\#\mathrm{EO}$ problems defined by arbitrary complex-valued signatures. 
Our main result is a complete FP versus $\#\mathrm{P}$-hard dichotomy for this general setting.

The significance of the $\#\mathrm{EO}$ framework goes beyond the problem of counting Eulerian orientations itself. It plays a fundamental role in the complexity classification of counting problems for two reasons.
First, the $\#\mathrm{EO}$ framework  provides a bridge between counting constraint satisfaction problems ($\#\mathrm{CSP}$)~\cite{cai2014complexity} and Holant problems~\cite{cai2009holantpa}. On one hand, although  $\#\mathrm{EO}$ signatures are required to be supported on half-weight inputs, the  $\#\mathrm{EO}$ framework is expressive enough to encode arbitrary Boolean $\#\mathrm{CSP}$, including those whose constraint functions are not supported on half-weight inputs.
On the other hand, $\#\mathrm{EO}$ constitutes an indispensable component in the complexity classification of Holant problems. Holant is a substantially more expressive counting framework that encompasses $\#\mathrm{CSP}$, counting perfect matchings~\cite{cai2009holantpa}, partition functions of statistical-physics models such as the six-vertex and eight-vertex models~\cite{cai2023complexity}, and the classical simulation of quantum circuits~\cite{cai2018clifford}. The dichotomy for real-valued Holant problems relies crucially on the classification of $\#\mathrm{EO}$ under ARS~\cite{shao2020real}. Under a suitable holographic transformation, ARS signatures correspond precisely to real-valued signatures.
More recently, complex-valued Holant problems containing an odd-arity signature were classified as either solvable in \(\mathrm{FP}^{\mathrm{NP}}\) (i.e., in polynomial time with access to an NP oracle), or \(\#\mathrm{P}\)-hard~\cite{meng2025odd}.
This classification was based on a corresponding \(\mathrm{FP}^{\mathrm{NP}}\) versus \(\#\mathrm{P}\)-hard classification for complex-valued $\#\mathrm{EO}$ problems without assuming ARS~\cite{meng2025fpnp}. 
Consequently, obtaining a complete FP versus $\#\mathrm{P}$-hard dichotomy for complex-valued Holant problems requires first resolving the remaining \(\mathrm{FP}^{\mathrm{NP}}\) cases for $\#\mathrm{EO}$ problems. The dichotomy established in this paper provides exactly this missing step.

A major difficulty in classifying general $\#\mathrm{EO}$ problems is that, once ARS is removed, previously unknown and sometimes rather unexpected tractable classes begin to emerge. 
The first such signatures appeared in the complexity classification of the six-vertex model~\cite{cai2018complexity}. 
Because these signatures have arity four, they admit relatively explicit descriptions.
 However, the structure becomes considerably more complicated with higher arities.
A recent line of work generalized these arity-four examples and discovered new tractable classes of signatures for $\#\mathrm{EO}$ problems~\cite{meng2025p}. Remarkably, the classes discovered in~\cite{shao2026eulerian} can be characterized perfectly by Hadamard codes.
This establishes an unexpected connection between two classical mathematical concepts, Eulerian orientations and Hadamard codes, through counting complexity.
The appearance of such nontrivial tractable classes illustrates why a complete dichotomy of \#EO problems is challenging. 
Without an exhaustive classification, it is difficult to predict whether further exceptional tractable families remain undiscovered.

The final missing family emerged in \cite{meng2025fpnp}, denoted as  $\mitsuup$ and $\mitsudown$ signatures. 
It is proved in \cite{meng2025fpnp} that apart from these signatures, the \#EO problem is \#P-hard for all other complex-valued signatures.
However, \#EO problems defined by these $\mitsuup$ and $\mitsudown$ signatures exhibit a phenomenon that had not previously appeared in the complexity classification of counting problems: they can be solved in polynomial time with access to an NP oracle for the corresponding decision problems, which ask whether there exists an assignment with nonzero evaluation. This yields an $\mathrm{FP}^{\mathrm{NP}}$ versus $\#\mathrm{P}$-hard classification for general \#EO.
If one believes that $\mathrm{FP}^{\mathrm{NP}}\neq \#\mathrm{P}$, and that the $\#\mathrm{EO}$ framework does not contain natural \#P-intermediate problems, then these exceptional problems should be expected to lie in FP. However, proving this appears to be highly nontrivial,  and it   remained unresolved in~\cite{meng2025fpnp}. 

To convey how unusual the remaining tractable family is, we review the smallest nontrivial example, $f_{56}$,\footnote{
During our investigation of the special signature $f_{56}$, we used OpenAI's GPT-5.4 series models as a research aid.
}
among the $\mitsuup$ and $\mitsudown$ signatures.
The signature $f_{56}$ has arity $56$. Indeed, $56$ is the smallest possible arity.
It has a support set $\{\alpha^1,\alpha^2,\ldots,\alpha^5\}$ of size $5$, with $f_{56}(\alpha^i)=1$ for $i\in\{1,\ldots,5\}$ and $f_{56}(\alpha)=0$ otherwise.
Clearly, $f_{56}$ is completely determined by its support set.
Below, we list its five support inputs.
Following~\cite{meng2025p}, define the matrices
\[
H_0=
\begin{bmatrix}
0\\0\\0\\0\\0
\end{bmatrix},
\quad
H_2=
\begin{bmatrix}
1&1&1&1&0&0&0&0&0&0\\
1&0&0&0&1&1&1&0&0&0\\
0&1&0&0&1&0&0&1&1&0\\
0&0&1&0&0&1&0&1&0&1\\
0&0&0&1&0&0&1&0&1&1
\end{bmatrix},
\quad
H_4=
\begin{bmatrix}
0&1&1&1&1\\
1&0&1&1&1\\
1&1&0&1&1\\
1&1&1&0&1\\
1&1&1&1&0
\end{bmatrix}.
\]
Also, define a $5$-by-$56$ matrix $R_{56}:=[\,H_0\ H_2^{\to 4}\ H_4^{\to 3}\,]$, where $A^{\to k}$ denotes the concatenation of $k$ copies of $A$.
Then, each row of $R_{56}$ is a support input $\alpha^i$ of $f_{56}$. 
The signature $f_{56}$ has the following properties. 
Each row $\alpha^i$ has Hamming weight $28$: it contains four $1$'s in each of the four $H_2$ blocks and four $1$'s in each of the three $H_4$ blocks. Hence $\operatorname{wt}(\alpha^i)=28$ for every $i\in[5]$, so $f_{56}$ is an EO signature of arity $56$.  For any pairwise distinct $i,j,k\in[5]$, we have $\operatorname{wt}(\alpha^i\oplus\alpha^j\oplus\alpha^k)=30>28$.

At first sight, there is little reason to expect the counting problem defined by such a signature $f_{56}$ to admit a polynomial-time algorithm. This example illustrates one of the main values of a complexity dichotomy: an exhaustive classification can reveal tractable problems whose algebraic structure and algorithmic tractability would otherwise be very difficult to anticipate.

In this paper, we handle the \#EO problems defined by the signature $f_{56}$, as well as all other $\mitsuup$ and $\mitsudown$ signatures, by giving a unified polynomial-time algorithm.
This settles the tractability of all cases previously placed in the $\mathrm{FP}^{\mathrm{NP}}$ class.

\begin{theorem}\label{thm:main}
    Every $\#\mathrm{EO}$ problem belonging to the $\mathrm{FP}^{\mathrm{NP}}$ class of the classification in~\cite{meng2025fpnp} is solvable in polynomial time.
    Consequently, the previous $\mathrm{FP}^{\mathrm{NP}}$ versus $\#\mathrm{P}$-hard classification of \#EO is strengthened to a complete FP versus $\#\mathrm{P}$-hard dichotomy.
\end{theorem}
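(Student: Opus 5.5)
The plan follows the strategy announced in the abstract. We never try to solve the counting problem from an LP optimum. Instead, we use an LP relaxation to upgrade the quasi-polymorphism condition that defines the $\mitsuup$ and $\mitsudown$ signatures into a genuine polymorphism condition. From that we read off affine structure on the support, and then count by Gaussian elimination.

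\emph{Setup.} By~\cite{meng2025fpnp}, it suffices to treat a finite signature set $\mathcal F$ in which every $f\in\mathcal F$ is a $\mitsuup$ signature (the $\mitsudown$ case follows by complementing all inputs, i.e.\ reversing all arrows). Concretely, for any three support points $\alpha,\beta,\gamma$ of $f$, either $\alpha\oplus\beta\oplus\gamma$ is again in the support, or its Hamming weight exceeds half the arity. An instance $\Omega$ is then a $\#\mathrm{CSP}$-like system: one Boolean variable per edge (the variable at one end, its negation at the other), with constraints $f_v$.

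\emph{Step 1: relax the global support.} Let $S(\Omega)\subseteq\{0,1\}^E$ be the set of global assignments with nonzero weight. Each local signature satisfies the XOR closure only up to ``weight overflow'', so $\alpha\oplus\beta\oplus\gamma$ for global solutions $\alpha,\beta,\gamma$ may fail at some vertices. To handle this, consider the LP in which each vertex $v$ carries a probability distribution $p_v$ on $\mathrm{supp}(f_v)$, and each edge $(u,v)$ imposes the marginal consistency $\Pr_{p_u}[x_e=1]=\Pr_{p_v}[x_e=0]$.

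The key claim is a counting identity. Summing the half-weight condition over all vertices, the total weight of $\alpha\oplus\beta\oplus\gamma$ over all vertex-ends equals exactly $|E|$, because each edge contributes one $1$ and one $0$ on its two ends. Consequently, no vertex can have strictly larger local weight unless some other vertex has strictly smaller local weight. The $\mitsuup$ condition forbids local weight strictly below half. Hence every vertex is exactly at half weight, and the $\oplus_3$ image lies in the support everywhere.

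The LP is what lets this argument run even when $\alpha,\beta,\gamma$ are only locally consistent or fractional objects. Averaging the weight inequality over an LP solution shows that the overflow set has zero mass. This lifts the quasi-polymorphism to an honest polymorphism of the relevant restricted relations: either the supports reachable by consistent LP solutions, or the set $S(\Omega)$ itself.

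\emph{Step 2: affine structure.} Once $\oplus_3$ is a polymorphism of $S(\Omega)$, and also of every pinned or partially fixed sub-instance (the same argument applies there), $S(\Omega)$ is an affine subspace coset over $\Ftwo$. Counting then needs two ingredients:
\begin{enumerate}[label=(\roman*)]
\item a description of the coset, which we build incrementally by self-reducibility, deciding nonemptiness of each pinned sub-instance through LP feasibility, since for affine-closed instances LP feasibility should coincide with existence of a solution;
\item the weights.
\end{enumerate}
For the weights, we must show that each $f_v$ restricted to the part of its support used by $S(\Omega)$ has the form handled in~\cite{meng2025fpnp}, e.g.\ values forming a product or affine-type function along the coset. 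Then the weighted sum over an affine coset is computed in polynomial time, as in the $\mathscr A$-type $\#\mathrm{CSP}$ algorithms. Given this, we obtain the main theorem, and the dichotomy follows by combining it with the hardness side of~\cite{meng2025fpnp}.

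\emph{Main obstacle.} The hard step is Step 1 in its fractional or partial form: showing that the LP relaxation is tight enough that the quasi-polymorphism transfers. Integral solutions suffice for the weight-counting identity, but we need it for LP-feasible, pinned configurations in order to replace the NP oracle. I would first prove the mass-balance lemma for arbitrary consistent marginals, using linearity of expected weight. Then I would show that any support point charged by an LP solution lies in an affine closure, which yields that LP feasibility decides existence.
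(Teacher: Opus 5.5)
\textbf{Gap 1: the LP does not decide existence.} The step that fails is Step 2(i), together with your closing claim that the LP ``decides existence''. LP feasibility does not coincide with existence of an effective orientation, even when every signature has affine support.

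Take two vertices $u,v$ joined by four parallel edges $e_1,\dots,e_4$, listed in the same order at both ends. Let $f_u$ be the indicator of $\{0101,1010\}$ and $f_v$ the indicator of $\{0110,1001\}$. Both supports are affine, and both signatures are products of binary (dis)equalities. So $\{f_u,f_v\}\subseteq\POLMID\cap\eom[\Pclass]$, and the instance lies inside the family of \Cref{thm:MWX-hard}.

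An Eulerian orientation forces $\sigma|_{E(v)}$ to be the bitwise complement of $\sigma|_{E(u)}$. That complement is never in $\su(f_v)$, so $\ceo_\Omega=0$. Yet the uniform distribution on each support gives every end-marginal $1/2$ and satisfies all vertex and edge constraints. Self-reducibility with LP feasibility as the existence oracle therefore gives wrong answers. The paper explicitly does not claim integrality of $\mathsf{P}(\Omega)$, and it never needs it. (The integral part of your Step 1 is just the known \Cref{lem:effective support set is affine}. The obstacle was never that $S(\Omega)$ is affine; it was computing it.)

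\textbf{Gap 2: closure of the LP-feasible supports.} You do not need the global coset or exact existence at all. By \Cref{lem:ceo value equal to a small support set}, you may restrict each $f_v$ to \emph{any} set containing $\mathcal{E}(f_v)$. If that set is affine, \Cref{lem:from affine to pairwise} enlarges it to $\mathcal{T}'_v$ with $f_v|_{\mathcal{T}'_v}\in\Aclass$ or $\Pclass$. The $\Aclass$/$\Pclass$ counting algorithm then handles both the weights and the zero-count case, as in the example above. The natural such set is $\mathcal{L}(f_v)$, the supports charged by some point of $\mathsf{P}(\Omega)$, which is computable with polynomially many LPs.

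The real content is proving that $\mathcal{L}(f_v)$ is closed under $\oplus_3$, and your zero-overflow argument does not give this. Made precise, that argument runs as follows: take three \emph{independent} draws from $\lambda^*_v$; the sign-marginal of their XOR is $p_{v,e}(\lambda^*)^3$; and $p_{u,e}^3+p_{v,e}^3=0$ by oddness of the cube, so $\sum_v\Psi_v=0$ with every term nonnegative. Plain linearity of expectation is not enough here. This only shows $\alpha\oplus\beta\oplus\gamma\in\su(f_v)$ for $\alpha,\beta,\gamma\in\mathcal{L}(f_v)$. It does not show membership in $\mathcal{L}(f_v)$, which is what you need in order to iterate to longer odd XORs.

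The paper closes this by exhibiting a feasible point that charges $\delta=\alpha\oplus\beta\oplus\gamma$. With $\lambda^*$ the average of three witnesses, set $\mu_{v,\zeta}=\sum_{\alpha'\oplus\beta'\oplus\gamma'=\zeta}\lambda^*_{v,\alpha'}\lambda^*_{v,\beta'}\lambda^*_{v,\gamma'}$, the law of the XOR. Then:
\begin{itemize}
\item zero overflow mass makes $\mu_v$ a probability vector on $\su(f_v)$;
\item its marginals satisfy $p_{v,e}(\mu)=p_{v,e}(\lambda^*)^3$, so the edge constraints transfer;
\item $\mu_{v,\delta}\ge\lambda^*_{v,\alpha}\lambda^*_{v,\beta}\lambda^*_{v,\gamma}>0$.
\end{itemize}
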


Since the classification of complex-valued $\#\mathrm{EO}$ problems serves as the central ingredient in the classification of complex-valued Holant problems containing an odd-arity signature, our result immediately yields the following consequence.

\begin{corollary}
    The previous $\mathrm{FP}^{\mathrm{NP}}$ versus $\#\mathrm{P}$-hard classification of complex-valued Holant problems containing an odd-arity signature is strengthened to a complete FP versus $\#\mathrm{P}$-hard dichotomy.
\end{corollary}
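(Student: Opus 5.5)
The plan is to reduce the statement to the two families $\mitsuup$ and $\mitsudown$. The $\mathrm{FP}^{\mathrm{NP}}$ side of the classification in~\cite{meng2025fpnp} consists of signature sets that are either already known to be in FP, or all of whose signatures are $\mitsuup$ (symmetrically, all $\mitsudown$). The two families are exchanged by flipping every bit, which preserves half-weight support and commutes with the EO edge structure. So it suffices to give one algorithm for sets $\mathcal F\subseteq\mitsuup$. Concretely, such a signature $f$ has the quasi-polymorphism property: for any $\alpha,\beta,\gamma\in\operatorname{supp}(f)$, the string $\alpha\oplus\beta\oplus\gamma$ has weight at least half the arity. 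The $\thup$ direction comes from comparing $\oplus_3$ with the majority-type ordering.

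The first step, following the abstract, is to introduce an LP relaxation of the decision (support) problem. Given an instance $\Omega$ with signatures from $\mathcal F$, take variables $x_e\in[0,1]$ for edges. For each vertex take a distribution $\lambda_v$ over $\operatorname{supp}(f_v)$ whose marginals match the incident $x_e$, reading an edge end as $x_e$ or $1-x_e$. Add the Eulerian constraint, which is automatic from half-weight support. The key structural lemma I will aim for is that, for $\mitsuup$ signatures, the quasi-polymorphism condition lifts to a genuine polymorphism on the relevant solution space. More precisely, I would show that $\operatorname{supp}(f)$ restricted to the face selected by an optimal, or relative-interior, LP solution is closed under $\oplus_3$.

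The argument for this would go as follows. The weight inequality $\wt(\alpha\oplus\beta\oplus\gamma)\ge n/2$ is summed over the LP distributions and compared against the global count of in/out edges. By double counting over edges, each edge contributes equally to both sides, so the total slack is forced to be zero. Hence equality holds pointwise, and the XOR of any three supported local assignments is again supported. Closure under $\oplus_3$ means the support on that face is an affine subspace over $\Ftwo$.

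The second step is to use this affine structure. Each $f_v$ restricted to its face has affine support, so I need to determine the values on that support. Here I would invoke the earlier characterization of $\mitsuup$ signatures from~\cite{meng2025fpnp}, and hope it forces values that are either constant or in the affine class $\Aclass$ on affine supports. The counting problem on the face then becomes a $\#\mathrm{CSP}$ with affine constraints and $\Aclass$-type weights, which is solvable by Gaussian elimination and quadratic-form exponential sums. Assignments outside the face contribute zero, because the face contains every integral solution; this is justified by LP duality and complementary slackness.

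The main obstacle is the lifting lemma. It needs a global complementary-slackness or averaging argument, not a local one, and it has to handle nonnegative versus complex weights, since the LP only sees supports. I would first try the relative-interior solution together with the summed weight identity above. If that fails, I would instead iterate: solve the LP, fix coordinates forced to be integral, and recurse, with polynomially many rounds. The corollary would then follow by plugging this result into the reduction of~\cite{meng2025odd}.
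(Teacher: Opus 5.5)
Your final step matches the paper: the corollary follows immediately from \Cref{thm:main}, because the classification of \cite{meng2025odd} reduces its $\mathrm{FP}^{\mathrm{NP}}$ cases to the $\mathrm{FP}^{\mathrm{NP}}$ cases of $\#\mathrm{EO}$. Your LP is also the paper's $\mathsf{L}(\Omega)$, and your relative-interior face is exactly $\mathcal{L}(f_v)$. The gap is in your lifting lemma.

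Your ``double counting'' should be stated explicitly as two facts: the identity $\sum_{\alpha,\beta,\gamma\in\su(f_v)}\lambda_{v,\alpha}\lambda_{v,\beta}\lambda_{v,\gamma}\Delta(\alpha\oplus\beta\oplus\gamma)=\sum_{e\in E(v)}p_{v,e}^3(\boldsymbol\lambda)$, and the cancellation $p_{u,e}^3+p_{v,e}^3=0$. Even so, this only shows that for $\alpha,\beta,\gamma$ in the face, $\alpha\oplus\beta\oplus\gamma$ has weight exactly half and hence lies in $\su(f_v)$. That is not closure of the face under $\oplus_3$. You need $\delta=\alpha\oplus\beta\oplus\gamma$ to receive positive mass in some feasible LP point. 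Knowing $\delta\in\su(f_v)$ does not help, since $\su(f_v)$ is generally not affine (e.g.\ $|\su(f_{56})|=5$).

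The missing idea is the paper's cubing map. From $\boldsymbol\lambda\in\mathsf{P}(\Omega)$ with $\lambda_{v,\alpha},\lambda_{v,\beta},\lambda_{v,\gamma}>0$, define $\mu_{v,\zeta}=\sum_{\alpha'\oplus\beta'\oplus\gamma'=\zeta}\lambda_{v,\alpha'}\lambda_{v,\beta'}\lambda_{v,\gamma'}$. By the step you did prove, all of this mass lands in $\su(f_v)$, so the vertex constraints hold. Because signs multiply under $\oplus_3$, $p_{v,e}(\boldsymbol\mu)=p_{v,e}(\boldsymbol\lambda)^3$, and the odd map $t\mapsto t^3$ preserves $p_{u,e}+p_{v,e}=0$. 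Hence $\boldsymbol\mu\in\mathsf{P}(\Omega)$ and $\mu_{v,\delta}\ge\lambda_{v,\alpha}\lambda_{v,\beta}\lambda_{v,\gamma}>0$. Your fallback of iterating and fixing integral coordinates does not supply this.

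Your second step also takes the value structure from the wrong place. The $\mathrm{FP}^{\mathrm{NP}}$ side of \Cref{thm:MWX-hard} is not ``all signatures are $\mitsuup$''. It additionally requires $\mathcal F\subseteq\eom[\Aclass]$ or $\mathcal F\subseteq\eom[\Pclass]$, and without that the problem is $\#\mathrm{P}$-hard. So the quasi-polymorphism, which is a pure support condition, cannot force $\Aclass$-type values; you must take the values from this second hypothesis.

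In the $\eom[\Pclass]$ case you also cannot simply restrict to the affine face and count. Restricting a $\Pclass$ signature to an affine set can leave $\Pclass$ (see the remark after \Cref{lem:from affine to pairwise}), and affine constraints with general unary weights are $\#\mathrm{P}$-hard in general. The paper instead uses \Cref{lem:pairwise opposite} to find a pairing $M$ with $\mathcal{L}(f_v)\subseteq\{0,1\}^M$. It then replaces $f_v$ by $f_v|_{\{0,1\}^M}\in\Pclass$, which is valid by \Cref{lem:ceo value equal to a small support set} since $\mathcal{E}(f_v)\subseteq\mathcal{L}(f_v)$. That inclusion needs no duality or complementary slackness: the indicator vector of an effective orientation is itself LP-feasible.
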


Our algorithm is based on a carefully designed linear-programming relaxation. We first give an overview of the main ideas.
We characterize the $\mitsuup$ and $\mitsudown$ signatures by a \emph{quasi-polymorphism} condition associated with the ternary XOR operation. Our goal is to show that, for a $\#\mathrm{EO}$ instance with $\mitsuup$ and $\mitsudown$ signatures, this quasi-polymorphism condition can be strengthened to an ordinary polymorphism condition. The latter has a much more rigid algebraic structure, which is precisely an affine linear subspace, and eventually leads to tractability. For this purpose, we formulate a linear system using indicator variables for local assignments in truth-table form, rather than the original edge variables of the $\#\mathrm{EO}$ instance.

Our use of the linear system, however, differs from its more standard role in linear-programming algorithms. We do not solve the linear-programming relaxation directly and output an integral point as a solution to the decision problem corresponding to the counting instance. Recall that this is why an NP oracle is needed. In fact, we cannot prove that the linear-programming relaxation has an integral feasible region. Rather, the linear-programming relaxation serves as a structural tool for lifting the quasi-polymorphism condition to an ordinary polymorphism condition. In addition, our analysis of the linear system is uncommon and may initially appear unnatural. Instead of analyzing the linear system itself, we study its cubic expressions, which interact precisely with the ternary XOR operation underlying the quasi-polymorphism condition.

Thus, the main technical contribution of this paper is not merely the removal of an NP oracle. It also provides a structural explanation for why the exceptional $\mathrm{FP}^{\mathrm{NP}}$ cases are tractable: the constraints imposed by a $\#\mathrm{EO}$ instance eliminate the freedom allowed by the quasi-polymorphism condition and force it to become a tractable polymorphism condition.

\section{Preliminaries}

\subsection{Counting Eulerian orientation}
A signature of arity $r$ is a function
$f:\{0,1\}^r\to\mathbb C$.
We write
$\mathrm{Var}(f)=(x_1,\ldots,x_r)$ for the variables of $f$ in a fixed order.
For an input
$\alpha=(\alpha_1,\ldots,\alpha_r)\in\{0,1\}^r$, the value of $f$ on
$\alpha$ is denoted by $f(\alpha)$.
We can also view $\alpha$ as a $0$-$1$ bit-string. In this notation,
 the $i$-th bit $\alpha_i$ of the string is
the value assigned to the $i$-th variable $x_i$ of $f$, which can also be written as $\alpha(x_i)$.
Note that the bit-string
notation is always understood with respect to the given order of the
variables.

\begin{definition}[Support set, $\mathcal{S}(f)$]
    The support set of a signature $f$ of arity $r$, denoted by $\su(f)$, is the set of all input strings on which $f$ takes non-zero values, i.e.,
$
    \su(f)
    =
    \{\alpha\in\{0,1\}^{r} \mid f(\alpha)\neq 0\}
$.
\end{definition}

For a positive integer $n$, let $[n]:=\{1,2,\ldots,n\}$.
Suppose that $|\su(f)|=m$. Fix an order of all strings in $\su(f)=\{\alpha^1, \alpha^2, \ldots, \alpha^m\}$.
We give the following truth-table representation $T(f)$ of $f$ which can be written as an $m\times r$ matrix where the $i$-th row is the support $\alpha^i$, and the columns are indexed by the variables $\mathrm{Var}(f)=(x_1, x_2, \ldots, x_r)$ of $f$ in a given order.
Then, $T_{i,j}(f)=\alpha^i(x_j)$.

We define the following useful sets. For a $0$-$1$ bit string $\alpha$ of even length $2d$, let $|\alpha|$ denote its length and let $\wt(\alpha)$ denote its Hamming weight, i.e., the number of 1's in $\alpha$. Define
$
  \eoe := \{\alpha\in\{0,1\}^* \mid \wt(\alpha)=\frac{1}{2}|\alpha|\},
  \eosg := \{\alpha\in\{0,1\}^* \mid \wt(\alpha)>\frac{1}{2}|\alpha|\},$ and $
  \eosl := \{\alpha\in\{0,1\}^* \mid \wt(\alpha)<\frac{1}{2}|\alpha|\}.
$
A signature $f$ is called an $\eo$ signature if $\su(f) \subseteq \eoe$.

For an undirected graph $G=(V,E)$, an orientation $\sigma$ of $G$ is an assignment of a direction to each edge of $G$. Under our convention,
a direction $u\rightarrow v$ of an edge $e=(u,v)$ can be encoded by assigning the value $1$ to the endpoint $u$ and the value $0$ to the endpoint
$v$. Moreover, an orientation $\sigma$ is called an Eulerian orientation if the number of incoming edges equals the number of outgoing
edges at every vertex, or equivalently, every vertex takes the same number of $0$ inputs and $1$ inputs. We
denote by $\mathrm{EO}(G)$ the set of all Eulerian orientations of $G$.

\begin{definition}[Counting EO]
    A counting complex-weighted Eulerian orientation problem is parametrized by a set of EO signatures $\mathcal{F}$, and is denoted by
    $\ceo(\mathcal{F})$. An instance of $\ceo(\mathcal{F})$ is a signature grid $\Omega=(G,\pi)$ over $\mathcal{F}$, which consists of a graph $G=(V,E)$ and a mapping $\pi$ that assigns to each vertex $v\in V$ an $f_v\in \mathcal{F}$ and a linear order of the incident edges at $v$. The output is
    $$
        \ceo_{\Omega}=\sum_{\sigma\in \mathrm{EO}(G)}\prod_{v\in V} f_v(\sigma|_{E(v)}),
    $$
    where $\sigma$ is an Eulerian orientation of $G$ and $\sigma|_{E(v)}$ is the orientation $\sigma$ restricted to the set $E(v)$ of incident edges of $v$.
\end{definition}

\begin{definition}[Effective orientations and decision EO]
    Given an $\ceo$ instance $\Omega=(G, \pi)$ where $G=(V, E)$ and $\pi(v)=f_v$, we say $\sigma\in \mathrm{EO}(G)$ is an effective Eulerian orientation of $\Omega$ if $\prod_{v\in V} f_v(\sigma|_{E(v)})\neq 0$.
    The Decision $\eo$ problem is to decide whether there exists an effective Eulerian orientation $\sigma\in \mathrm{EO}(G)$ for an instance $\Omega=(G, \pi)$.
\end{definition}

    \begin{definition}[Locally effective support set, $\mathcal{E}(f_v)$]
        Given an $\ceo$ instance $\Omega=(G, \pi)$ where $G=(V, E)$ and $\pi(v)=f_v$, 
    we say an input string $\alpha\in \su(f_v)$ is locally effective for $f_v$ if there exists an effective Eulerian orientation $\sigma\in \mathrm{EO}(G)$ of $\Omega$ such that $\alpha=\sigma|_{E(v)}$. 
    We use $\mathcal{E}(f_v)$ to denote the set of all locally effective strings for $f_v$, namely the \emph{locally effective support set}. 
    \end{definition}

    For a signature $f$ of arity $r$ and a set $\mathcal{T}\subseteq\{0,1\}^r$,
we write $f|_\mathcal{T}$ for the signature obtained by restricting the support set of $f$
to $\mathcal{T}$, namely $(f|_\mathcal{T})(\alpha)=f(\alpha)$ if $\alpha\in \mathcal{T}$ and
$(f|_\mathcal{T})(\alpha)=0$ otherwise.

\begin{lemma}\label{lem:ceo value equal to a small support set}
      Let $\Omega=(G, \pi)$ with $G=(V, E)$ and $\pi(v)=f_v$ be an $\ceo$ instance.
      For every $v \in V$, suppose that  $\mathcal{T}_v$  is a set satisfying $\mathcal{E}(f_v)\subseteq \mathcal{T}_v$.
      Consider another $\ceo$ instance $\Omega'=(G, \pi')$ on the same underlying graph $G$ where $\pi'(v)=f_v\mid _{\mathcal{T}_v}$.
     Then, we have $\ceo_\Omega=\ceo_{\Omega'}$.
\end{lemma}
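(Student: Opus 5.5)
The plan is to compare the two sums term by term over the same index set $\mathrm{EO}(G)$, since $\Omega$ and $\Omega'$ share the underlying graph $G$. Fix an Eulerian orientation $\sigma\in\mathrm{EO}(G)$. We will show that
\[
\prod_{v\in V} f_v(\sigma|_{E(v)})=\prod_{v\in V} (f_v|_{\mathcal{T}_v})(\sigma|_{E(v)}).
\]
Summing this identity over $\sigma$ then gives $\ceo_\Omega=\ceo_{\Omega'}$. The argument splits into two cases according to whether $\sigma$ is effective for $\Omega$.

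\emph{Effective case.} Suppose $\sigma$ is an effective Eulerian orientation of $\Omega$. Then for each $v\in V$ the string $\sigma|_{E(v)}$ has $f_v(\sigma|_{E(v)})\neq 0$, so it lies in $\su(f_v)$. By the definition of locally effective strings, witnessed by $\sigma$ itself, it belongs to $\mathcal{E}(f_v)$, and hence to $\mathcal{T}_v$. By the definition of restriction, $(f_v|_{\mathcal{T}_v})(\sigma|_{E(v)})=f_v(\sigma|_{E(v)})$ for every $v$, so the two products agree factor by factor.

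\emph{Non-effective case.} Suppose $\sigma$ is not effective. Then some vertex $u$ has $f_u(\sigma|_{E(u)})=0$, and the left product vanishes. The restriction satisfies $(f_u|_{\mathcal{T}_u})(\alpha)\in\{f_u(\alpha),0\}$ for every $\alpha$, so the corresponding factor on the right is also $0$, and both products are $0$.

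One point needs checking: each $f_v|_{\mathcal{T}_v}$ must still be an EO signature, so that $\Omega'$ is a legitimate $\ceo$ instance on the same graph. This holds because restriction only shrinks the support, giving $\su(f_v|_{\mathcal{T}_v})\subseteq\su(f_v)\subseteq\eoe$. I do not expect any step to be an obstacle; the only subtlety is noting that local effectiveness is witnessed by the same global orientation $\sigma$.
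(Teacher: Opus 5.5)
Your proposal is correct and matches the paper's proof: a term-by-term comparison over $\mathrm{EO}(G)$, split into effective orientations (each $\sigma|_{E(v)}\in\mathcal{E}(f_v)\subseteq\mathcal{T}_v$, so every factor is unchanged) and non-effective ones (both products vanish). Your extra check that $f_v|_{\mathcal{T}_v}$ is still an EO signature, so that $\Omega'$ is a legitimate instance, is a point the paper leaves implicit.
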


\begin{proof}
      By the definition of Counting $\eo$, $\ceo_{\Omega}=\sum_{\sigma\in \eo(G)}\prod_{v\in V}f_v(\sigma|_{E(v)})$. We compare the two sums term by term. If $\sigma$ is an effective Eulerian orientation of $\Omega$, then for every $v\in V$, the string $\sigma|_{E(v)}$ is locally effective, since $\mathcal{E}(f_v)\subseteq \mathcal{T}_v$, we get $(f_v|_{\mathcal{T}_v})(\sigma|_{E(v)})=(f_v|_{\mathcal{E}(f_v)})(\sigma_{E(v)})$ for every $v\in V$, so the term of $\sigma$ is unchanged. If $\sigma$ is not an effective Eulerian orientation, then $\prod_{v\in V} f_v(\sigma|_{E(v)})=0$, thus $\prod_{v\in V} (f_v|_{\mathcal{E}(f_v)})(\sigma|_{E(v)})=0$ and $\prod_{v\in V} (f_v|_{\mathcal{T}(v)})(\sigma|_{E(v)})=0$. Hence $\ceo_\Omega=\ceo_{\Omega'}$.
\end{proof}

Below, we give a truth-table representation, namely truth-table indicator vector, for effective Eulerian orientations. We will define a linear system in terms of these vectors.  

\begin{definition}[Truth-table indicator vector]\label{def:tt-indicator}
Let $\Omega=(G, \pi)$ be a $\ceo$ instance where $G=(V, E)$ and $\pi(v)=f_v$ for every $v\in V$.
Suppose that $|\su(f_v)|=m_v$, and $\su(f_v)$ is listed in a fixed order.
For every effective Eulerian orientation $\sigma$ of $\Omega$,
we define the following 0-1 valued vector $\boldsymbol \lambda^\sigma=(\lambda^\sigma_{v,\alpha})_{\alpha\in\su(f_v),\, v\in V}\in \{0, 1\}^{\sum_{v\in V}m_v}$, where $\lambda^\sigma_{v,\alpha}=1$ if $\sigma|_{E(v)}=\alpha$ and $\lambda^\sigma_{v,\alpha}=0$ otherwise.
We say $\boldsymbol \lambda^\sigma$ is an effective truth-table indicator of the instance $\Omega$.

\end{definition}

\subsection{Quasi-polymorphism}

We use $\oplus_3:\{0, 1\}^3\rightarrow \{0, 1\}$ to denote the ternary XOR operation, i.e., $\oplus_3(a, b, c)=a\oplus b \oplus c$.
The notion of polymorphism~\cite{dblp:journals/jacm/jeavonscg97,dblp:journals/siamcomp/bulatovjk05} is used in the algebraic approach to the CSP.
It is usually defined for relations, i.e., $\{0, 1\}$-valued functions.
We extend it to complex-valued signatures by considering its support set. We introduce quasi-polymorphism as a key notion in our linear programming algorithm.

\begin{definition}[Polymorphism and quasi-polymorphism]
    For an $\eo$ signature $f$ of arity $2d$,
 we say $f$ admits $\oplus_3$
 \begin{enumerate}
 \item as a \emph{polymorphism}, denoted by $f\in \mathsf{Pol}(\oplus_3)$, if for all $\alpha,\beta,\gamma\in\su(f)$,
    $\alpha\oplus\beta\oplus\gamma\in \su(f)$,
     \item  as an \emph{up-polymorphism}, denoted by $f\in \mathsf{Pol}^\uparrow (\oplus_3)$, if for all $\alpha,\beta,\gamma\in\su(f)$,
    $\alpha\oplus\beta\oplus\gamma\in \su(f)\cup\eosg$;
    \item and as a \emph{down-polymorphism}, denoted by $f\in \mathsf{Pol}^\downarrow (\oplus_3)$, if for all $\alpha,\beta,\gamma\in\su(f)$,
    $\alpha\oplus\beta\oplus\gamma\in \su(f)\cup\eosl.$
 \end{enumerate}

  Both up and down polymorphisms are called quasi-polymorphisms.

  For a signature set $\mathcal{F}$, if for every $f\in \mathcal{F}$, we have $f\in \mathsf{Pol}^\uparrow (\oplus_3)$, then we say $\mathcal{F}$ admits $\oplus_3$ as an up-polymorphism, denoted by $\mathcal{F}\subseteq \mathsf{Pol}^\uparrow (\oplus_3)$.
  Symmetrically, we can define $\mathcal{F}\subseteq \mathsf{Pol}^\downarrow (\oplus_3)$. Similarly, we can define $\mathcal{F}\subseteq \POLMID.$

\end{definition}

\begin{remark}
 If a signature $f$ admits $\oplus_3$ as a polymorphism, then its support set $\su(f)$ is an \emph{affine linear subspace} over $\Ftwo$. 
 We simply say $\su(f)$ is affine, or $f$ has an affine support. 
 
 Signatures in $\mathsf{Pol}^\uparrow (\oplus_3)$ and $\mathsf{Pol}^\downarrow (\oplus_3)$ were originally introduced in~\cite{meng2025fpnp}, and denoted as $\mitsuup$ and $\mitsudown$ signatures respectively.
 In this paper, we view these signatures through the lens of quasi-polymorphisms, which is a one-sided relaxation of the standard definition of a polymorphism.
\end{remark}

The following lemma is proved in~\cite{meng2025fpnp} for \#EO instances with signatures admitting $\oplus_3$ as a quasi-polymorphism.

\begin{lemma}[\cite{meng2025fpnp}]\label{lem:effective support set is affine}
    Let $\Omega=(G, \pi)$ with $G=(V, E)$ and $\pi(v)=f_v$ be an instance of  $\ceo(\mathcal{F})$.  
 If $\mathcal{F}\subseteq \mathsf{Pol}^\uparrow (\oplus_3)$ or $\mathcal{F}\subseteq \mathsf{Pol}^\downarrow (\oplus_3)$,
    then for every $v\in V$, the local effective support set $\mathcal{E}(f_v)$ is affine. 
\end{lemma}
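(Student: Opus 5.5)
We argue for $\mathcal{F}\subseteq\POLUP$; the case $\POLDOWN$ is symmetric, by swapping the roles of $0$ and $1$ (reversing all edges), which exchanges $\eosg$ and $\eosl$. Fix $v$ and take $\alpha,\beta,\gamma\in\mathcal{E}(f_v)$. By definition there are effective Eulerian orientations $\sigma_1,\sigma_2,\sigma_3$ with $\sigma_1|_{E(v)}=\alpha$, $\sigma_2|_{E(v)}=\beta$ and $\sigma_3|_{E(v)}=\gamma$. Our goal is to show that $\alpha\oplus\beta\oplus\gamma\in\mathcal{E}(f_v)$. We look at the edgewise XOR $\tau=\sigma_1\oplus\sigma_2\oplus\sigma_3$. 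Each edge $e=(u,w)$ carries the two endpoint values $(\sigma_i(e)_u,\sigma_i(e)_w)\in\{(0,1),(1,0)\}$. XOR of three complementary pairs is again a complementary pair, so $\tau$ is a well-defined orientation of $G$, and at each vertex $u$ its local string is $\tau|_{E(u)}=\sigma_1|_{E(u)}\oplus\sigma_2|_{E(u)}\oplus\sigma_3|_{E(u)}$.

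\textbf{Step 1: a counting argument forces $\tau$ to be Eulerian.} At every vertex $u$, the three local strings lie in $\su(f_u)$. The up-polymorphism property then gives $\tau|_{E(u)}\in\su(f_u)\cup\eosg$, so in either case $\wt(\tau|_{E(u)})\ge \frac12\deg(u)$. We now sum over all vertices. Each edge contributes exactly one $1$ (at its tail) under any orientation, so $\sum_u \wt(\tau|_{E(u)})=|E|=\sum_u\frac12\deg(u)$. Since every summand satisfies $\wt(\tau|_{E(u)})\ge\frac12\deg(u)$ and the two totals agree, equality must hold at every vertex. Hence no local string of $\tau$ lies in $\eosg$, and therefore $\tau|_{E(u)}\in\su(f_u)$ for every $u$.

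\textbf{Step 2: conclude.} From Step 1, $\tau$ is an Eulerian orientation and $f_u(\tau|_{E(u)})\neq0$ for all $u$, so $\tau$ is effective. Consequently $\tau|_{E(v)}=\alpha\oplus\beta\oplus\gamma\in\mathcal{E}(f_v)$. Thus $\mathcal{E}(f_v)$ is closed under $\oplus_3$. If $\mathcal{E}(f_v)$ is empty it is trivially affine; otherwise a nonempty subset of $\Ftwo^{r}$ closed under ternary XOR is an affine subspace, namely a translate $\alpha_0\oplus L$ with $L=\{\alpha\oplus\alpha_0:\alpha\in\mathcal{E}(f_v)\}$ closed under addition.

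The one step needing care is the global weight-counting identity in Step 1. It is exactly what upgrades the one-sided (quasi) condition to a genuine polymorphism condition on effective supports, and it relies on the instance being a closed graph in which every edge has both ends at vertices.
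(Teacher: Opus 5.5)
Your proof is correct. The paper does not prove this lemma itself; it is quoted from \cite{meng2025fpnp}. Your argument is, however, the integral counterpart of the paper's proof of \Cref{the:lifted to affine}.

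Your orientation $\tau=\sigma_1\oplus\sigma_2\oplus\sigma_3$ plays the role of the vector $\boldsymbol\mu$. Your handshake identity $\sum_u \wt(\tau|_{E(u)})=|E|$, equivalently $\sum_u \Delta(\tau|_{E(u)})=0$, is what the paper's identity $\sum_{v}\Psi_v(\boldsymbol\lambda)=0$ becomes at integral points. Concretely, for $\boldsymbol\lambda^*=\frac13(\boldsymbol\lambda^{\sigma_1}+\boldsymbol\lambda^{\sigma_2}+\boldsymbol\lambda^{\sigma_3})$ it is exactly the average of your identity over the $27$ orientations $\sigma_a\oplus\sigma_b\oplus\sigma_c$, $(a,b,c)\in[3]^3$. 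In both arguments the one-sided quasi-polymorphism condition then forces equality at every vertex.

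Your route is more elementary and addresses $\mathcal{E}(f_v)$ directly. The paper's theorem does not give this, since affineness of $\mathcal{L}(f_v)\supseteq\mathcal{E}(f_v)$ does not by itself imply affineness of $\mathcal{E}(f_v)$.

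The limitation of your route is that it needs genuine effective orientations as witnesses, and it relies on the fact that the XOR of three orientations is again an orientation. The paper replaces that combinatorial fact with the algebraic identity $p_{v,e}(\boldsymbol\mu)=p_{v,e}^3(\boldsymbol\lambda^*)$ together with the cubed edge constraint. This is what lets the same closure argument run on the polynomial-time computable set $\mathcal{L}(f_v)$, and so removes the NP oracle.
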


\subsection{Known results}

We recall several theorems that will be used in this paper: two classes of
polynomial-time solvable $\eo$ signatures, $\Aclass$ and $\Pclass$, and the
$\pnp$ versus $\#\mathrm{P}$-hard classification for
$\ceo$ \cite{meng2025fpnp}.

\begin{definition}[The classes $\Aclass$ and $\Pclass$]
We use $\Aclass$ and $\Pclass$ for the two standard tractable classes in the
$\#\mathrm{CSP}$ dichotomy of~\cite{cai2014complexity}.  A signature
$f:\{0,1\}^r\to\mathbb C$ belongs to $\Aclass$ if either $f\equiv 0$, or
there exist a nonzero constant $\lambda\in\mathbb C$, a matrix $A$ over
$\Ftwo$, and $0$-$1$-valued affine linear forms
$\ell_1,\ldots,\ell_m$ over $\Ftwo$ such that, for
$X=(x_1,\ldots,x_r,1)^T$,
\[
    f(x_1,\ldots,x_r)
    =
    \lambda\,\chi_{AX=0}\,
    i^{\ell_1(X)+\cdots+\ell_m(X)}.
\]
Here $\chi$ is a 0-1 indicator function such that $\chi_{AX=0}=1$ iff $AX=0$, while the exponent
$\ell_1(X)+\cdots+\ell_m(X)$ is an ordinary integer sum, equivalently read
modulo $4$.
The class $\Pclass$ consists of signatures expressible as products of unary
signatures, binary equality signatures $=_2$, and binary disequality
signatures $\neq_2$.
\end{definition}
\begin{remark}
If a signature $f$ is in  $\mathscr{A}$ or $\mathscr{P}$, then its support $\su(f)$ is affine. 
\end{remark}

\begin{theorem}[Tractable classes $\Aclass$ and $\Pclass$]\label{cor:tractable of EO}
    Let $\mathcal F$ be a set of $\eo$ signatures. If $\mathcal{F}\subseteq \Aclass$ or $\mathcal{F}\subseteq \Pclass$, then $\ceo(\mathcal{F})\in \mathrm{FP}$.
\end{theorem}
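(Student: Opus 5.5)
The plan is to reduce $\ceo(\mathcal F)$ directly to a $\#\mathrm{CSP}$ problem over a signature set that is still contained in $\Aclass$ (respectively $\Pclass$). Then we invoke the $\#\mathrm{CSP}$ dichotomy of~\cite{cai2014complexity}, which states that $\#\mathrm{CSP}(\mathcal G)\in\mathrm{FP}$ whenever $\mathcal G\subseteq\Aclass$ or $\mathcal G\subseteq\Pclass$.

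First, given an instance $\Omega=(G,\pi)$, we fix for every edge $e=(u,v)$ an arbitrary distinguished endpoint, say $u$, and introduce one Boolean variable $x_e$. An orientation of $e$ is encoded by assigning $x_e$ to $u$ and $1-x_e=\neg x_e$ to $v$. This is a bijection between orientations $\sigma$ of $G$ and assignments to $(x_e)_{e\in E}$.

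Next, for each vertex $v$ we define $f'_v$ to be $f_v$ with the inputs negated at those edges for which $v$ is not the distinguished endpoint. Then $f'_v$ evaluated on the $x_e$'s equals $f_v(\sigma|_{E(v)})$. Since every $f_v$ is an $\eo$ signature, $f_v$ vanishes unless $\sigma|_{E(v)}$ has half weight. Hence the sum over all orientations in $\prod_v f_v(\sigma|_{E(v)})$ already restricts to Eulerian orientations, and
\[
\ceo_\Omega=\sum_{x\in\{0,1\}^E}\prod_{v\in V} f'_v(x|_{E(v)}),
\]
which is exactly a $\#\mathrm{CSP}$ instance over $\{f'_v\}$.

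It remains to check closure under negating inputs. For $\Aclass$, replacing $x_j$ by $x_j\oplus 1$ is an affine substitution in $X=(x_1,\ldots,x_r,1)^T$. It transforms $AX=0$ into $A'X=0$ and each affine form $\ell_i$ into an affine form, so the result stays in $\Aclass$. For $\Pclass$, negating a variable turns an occurrence of $=_2$ into $\neq_2$ and vice versa, and turns a unary $[a,b]$ into $[b,a]$. So the signature remains a product of the allowed factors.

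The only mildly delicate point is this closure verification. In particular, for $\Aclass$ one must check that after the substitution the exponent, an integer sum of $0$-$1$ affine forms read mod $4$, is again of the required shape. This holds because each $\ell_i(X)$ with a shifted argument is again a $0$-$1$ valued affine form. Since $\mathcal F$ is finite, the set of negated versions is finite and computable. The reduction is polynomial, and \cite{cai2014complexity} completes the proof.
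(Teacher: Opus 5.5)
Your argument is correct and is essentially the standard one. The paper does not prove this statement; it recalls it as a known consequence of the $\#\mathrm{CSP}$ dichotomy~\cite{cai2014complexity}, usually phrased as $\ceo(\mathcal F)=\mathrm{Holant}(\neq_2\mid\mathcal F)\le\#\mathrm{CSP}(\mathcal F\cup\{\neq_2\})$ with $\neq_2\in\Aclass\cap\Pclass$. Keeping $\neq_2$ as an explicit binary constraint, rather than absorbing it into one endpoint's signature, makes your closure-under-negation check unnecessary (your check is correct). Your appeal to finiteness of $\mathcal F$, which the statement does not assume, is also not needed, since each $f'_v$ is computed directly from $f_v$.
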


Given $2d$ variables $x_1, x_2, \ldots, x_{2d} \in \{0, 1\}$ in a fixed order, a perfect pairing $M$ of them is a partition of the $2d$ variables into $d$ pairs, namely
$M=\{\{x_{i_1},x_{i_2}\},\{x_{i_3},x_{i_4}\},\dots,\{x_{i_{2d-1}},x_{i_{2d}}\}\}.$ 
Fixing a perfect pairing $M$ of the $2d$ variables, we define $$\{0, 1\}^M:=\{(x_1, x_2, \ldots, x_{2d})\in\{0, 1\}^{2d}\mid x_{i_1}\neq x_{i_2}, x_{i_3}\neq x_{i_4}, \ldots, x_{i_{2d-1}}\neq x_{i_{2d}}\}.$$

\begin{lemma}[\cite{cai2020beyond}]\label{lem:pairwise opposite}
    If an $\eo$ signature $f$ has an affine support, i.e., $f\in \POLMID$, then there exists a perfect pairing $M$ of the variables of $f$ such that $\mathcal{S}(f)\subseteq \{0, 1\}^M$.
    Moreover, $M$ can be founded in time polynomial on the arity of $f$.
\end{lemma}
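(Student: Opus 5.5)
This lemma is cited from \cite{cai2020beyond}, and I would prove it directly from the affine structure of the support together with the balanced-weight condition. Assume $\su(f)\neq\emptyset$; otherwise any pairing works. Fix $\alpha^0\in\su(f)$. Then $L:=\su(f)\oplus\alpha^0$ is a linear subspace of $\Ftwo^{2d}$. Every $\alpha\in\su(f)$ has weight exactly $d$. The key observation is the following. For each variable $x_i$, the coordinate function $\alpha\mapsto\alpha_i$ restricted to $\su(f)$ is an affine function on an affine space. So it is either constant, or takes value $1$ on exactly half of $\su(f)$.

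Next I would group the coordinates into classes. Say $x_i\sim x_j$ if $\alpha_i\oplus\alpha_j$ is constant on $\su(f)$, i.e.\ the coordinate functionals $e_i,e_j$ restricted to $L$ are equal. Call $x_i$ \emph{fixed} if $\alpha_i$ is constant on $\su(f)$, i.e.\ $e_i|_L=0$. Within a class, each coordinate agrees or disagrees with a class representative, with a fixed pattern. The goal is a counting identity forcing balance inside each class. I would use the averaging argument with characters. Since $\wt(\alpha)=d$ for all $\alpha\in\su(f)$, for every nonzero linear functional $\chi$ on $L$ consider $\sum_{\alpha\in\su(f)}(-1)^{\chi(\alpha\oplus\alpha^0)}\wt(\alpha)$. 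This sum equals $0$, because the weight is constant and the character sum vanishes. Expanding $\wt(\alpha)=\sum_i\alpha_i$ and writing $\alpha_i=\frac{1-(-1)^{\alpha_i}}{2}$, only the coordinates whose restricted functional $e_i|_L$ equals $\chi$ contribute. This yields, for each nonzero $\chi$, that among coordinates with $e_i|_L=\chi$ the number with $\alpha^0_i=0$ equals the number with $\alpha^0_i=1$. For the fixed coordinates, where $\chi=0$, the total weight condition together with the balance of all other classes gives the same balance: exactly half of them are $1$ in $\alpha^0$, hence in every support string.

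Now I would build the pairing. Within each class $\{i: e_i|_L=\chi\}$, the coordinates split into two equal-size groups, $A_\chi=\{i:\alpha^0_i=0\}$ and $B_\chi=\{i:\alpha^0_i=1\}$. For any $\alpha\in\su(f)$ and $i,j$ in the same class, $\alpha_i\oplus\alpha_j=\alpha^0_i\oplus\alpha^0_j$. So pairing each element of $A_\chi$ with an element of $B_\chi$ by an arbitrary bijection gives pairs that disagree on every support string. Doing this for every class, including the fixed one, yields a perfect pairing $M$ with $\su(f)\subseteq\{0,1\}^M$.

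For the algorithm, there is one subtlety: $|\su(f)|$ may be exponential in the arity. However, the input is the truth table of $f$, which has size $2^{2d}$, so we can compute $\su(f)$, $\alpha^0$, a basis of $L$, and the values of $e_i|_L$ on that basis in time polynomial in the input size. Grouping coordinates by these values and pairing them is then immediate. The main obstacle I anticipate is the character-sum step: correctly handling the sign bookkeeping so that each class, including the $\chi=0$ class, is exactly balanced. I would verify this by expanding $(-1)^{\alpha_i}=(-1)^{\alpha^0_i}(-1)^{e_i(\alpha\oplus\alpha^0)}$ and using orthogonality of characters on $L$.
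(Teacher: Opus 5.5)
Your proof is correct. The paper gives no proof of this lemma and simply imports it from \cite{cai2020beyond}, so your proposal is a self-contained argument rather than a reconstruction of an in-paper one.

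The central computation checks out. Write $\operatorname{wt}(\alpha)=d-\frac12\sum_i(-1)^{\alpha^0_i}(-1)^{e_i(\alpha\oplus\alpha^0)}$, which is constant on $\mathcal{S}(f)$. Orthogonality of the characters $\ell\mapsto(-1)^{\chi(\ell)}$, $\chi\in L^*$, then gives $\sum_{i:\,e_i|_L=\chi}(-1)^{\alpha^0_i}=0$ for every $\chi$. This holds directly for $\chi=0$ as well, so your subtraction argument for the fixed class is valid but not needed. Since $\alpha_i\oplus\alpha_j=\alpha^0_i\oplus\alpha^0_j$ within a class, any bijection between the zeros and ones of $\alpha^0$ in each class gives pairs that are opposite on the whole support.

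What your route adds over the citation is an explicit and cheap construction. From $\alpha^0$ and a basis $b_1,\dots,b_k$ of $L$, the class of coordinate $i$ is just the bit vector $(e_i(b_1),\dots,e_i(b_k))$. So the pairing is found in $\mathrm{poly}(d)$ time once the affine set is given succinctly.

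Your truth-table remark reinterprets the literal ``polynomial in the arity'' claim rather than proving it, since reading a table of size $2^{2d}$ is not polynomial in $d$. This does not matter for the paper's actual use of the lemma. There it is invoked through Lemma~\ref{lem:from affine to pairwise} with $\mathcal{T}=\mathcal{L}(f_v)$, which is available as an explicit list of at most $s$ strings. Gaussian elimination then yields a basis in $\mathrm{poly}(d,s)$ time, consistent with the bound in Proposition~\ref{prop:algorithm-correctness}.
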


\begin{definition}[{The classes $\eom[\Aclass]$ and $\eom[\Pclass]$}]
We say that an {\rm EO} signature $f$ of arity $2d$ is in the class $\eom[\Aclass]$  if for any perfect pairing $M$ of the $2d$ variables of $f$, we have $f|_{\{0, 1\}^M}\in \Aclass$. 
We say that $f$ is in the class $\eom[\Pclass]$  if for any perfect pairing $M$ of the $2d$ variables of $f$, $f|_{\{0, 1\}^M}\in \Pclass$. 
\end{definition}

\begin{lemma}\label{lem:from affine to pairwise}
    
    If $f\in \eom[\Pclass]$ or $\eom[\Aclass]$, then for any affine linear subspace $\mathcal{T}\subseteq \mathcal{S}(f)$, there exists some $\mathcal{T}'$ satisfying $\mathcal{T}\subseteq \mathcal{T}'$
    such that $f\mid_\mathcal{T'}\in \mathscr{P}$ or $\mathscr{A}$ respectively.

    In particular, if $f\in \eom[\Aclass]$, then $f\mid_\mathcal{T'}$ can be simply replaced by $f\mid_\mathcal{T}$.
\end{lemma}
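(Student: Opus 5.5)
The plan is to treat the two classes separately, using \Cref{lem:pairwise opposite} to turn the affine subspace $\mathcal{T}$ into a perfect pairing $M$. Then we apply the defining property of $\eom[\Pclass]$ or $\eom[\Aclass]$ to $f|_{\{0,1\}^M}$ and set $\mathcal{T}':=\{0,1\}^M\cap\su(f)$, or simply $\mathcal{T}$ in the $\Aclass$ case. Since $\mathcal{T}\subseteq\su(f)\subseteq\eoe$ is affine, the restricted signature $g:=f|_{\mathcal{T}}$ is an $\eo$ signature with affine support, i.e.\ $g\in\POLMID$. Applying \Cref{lem:pairwise opposite} to $g$ yields a perfect pairing $M$ with $\mathcal{T}=\su(g)\subseteq\{0,1\}^M$. (If $\mathcal{T}$ is empty, $f|_{\mathcal{T}}\equiv 0$, which lies in $\Aclass$, and any pairing works for $\Pclass$; I will dispose of this trivially.)

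For $\eom[\Pclass]$, take $\mathcal{T}':=\{0,1\}^M$. Then $\mathcal{T}\subseteq\mathcal{T}'$, and $f|_{\mathcal{T}'}=f|_{\{0,1\}^M}\in\Pclass$ by the definition of $\eom[\Pclass]$. This is immediate. For $\eom[\Aclass]$, the same choice gives $f|_{\{0,1\}^M}\in\Aclass$, which proves the first claim. For the ``in particular'' claim, I must show $f|_{\mathcal{T}}\in\Aclass$. Write $h:=f|_{\{0,1\}^M}=\lambda\chi_{AX=0}\,i^{\sum\ell_j(X)}$. Because $\mathcal{T}$ is affine, there is a matrix $B$ with $\chi_{\mathcal{T}}=\chi_{BX=0}$. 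Since $\mathcal{T}\subseteq\{0,1\}^M$, we get
\[
f|_{\mathcal{T}}=h\cdot\chi_{BX=0}=\lambda\,\chi_{A'X=0}\,i^{\sum_j\ell_j(X)},
\]
where $A'$ stacks $A$ and $B$. If this product vanishes identically, it is $\equiv0\in\Aclass$; otherwise it has exactly the required form. So $\Aclass$ is closed under restriction to affine subspaces, and this gives the second claim.

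The only point needing care is the first step: I must ensure that \Cref{lem:pairwise opposite} applies to $f|_{\mathcal{T}}$. It requires an $\eo$ signature with affine support. The support of $f|_{\mathcal{T}}$ is $\mathcal{T}\cap\su(f)=\mathcal{T}$ because $\mathcal{T}\subseteq\su(f)$, so the support is exactly the affine set $\mathcal{T}$. I would state this explicitly; I do not expect a genuine obstacle. The distinction between the $\Pclass$ and $\Aclass$ cases exists because $\Pclass$ is not closed under multiplication by $\chi_{BX=0}$ for a general affine constraint, since parity constraints of length greater than two are not in $\Pclass$. This is why the $\Pclass$ case needs the enlargement to $\mathcal{T}'$.
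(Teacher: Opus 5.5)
Your proposal is correct and follows the paper's proof essentially step for step. You apply \Cref{lem:pairwise opposite} to $f|_{\mathcal{T}}$ to get a pairing $M$ with $\mathcal{T}\subseteq\{0,1\}^M$, take $\mathcal{T}'=\{0,1\}^M$, and in the $\Aclass$ case stack the affine constraints $BX=0$ defining $\mathcal{T}$ under $A$. Your two stated choices of $\mathcal{T}'$, namely $\{0,1\}^M$ and $\{0,1\}^M\cap\su(f)$, yield the same restricted signature, so that inconsistency is harmless.
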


\begin{proof}
    Suppose that $f$ is an $\eo$ signature of arity $2d$. Since $f|_{\mathcal{T}}$ is an $\eo$ signature and has affine support, by~\Cref{lem:pairwise opposite}, there exists a perfect pairing $M$ of $2d$ variables of $f|_{\mathcal{T}}$ such that $\mathcal{T}\subseteq \{0,1\}^M$. Since $f\in \eo^{\Pclass}$ or $\eo^{\Aclass}$, so $f|_{\{0,1\}^M}\in \Pclass$ or $\Aclass$ respectively. 
    
    Now we consider the case that $f\in \eo^{\Aclass}$, by the above discussion, we have $f|_{\{0,1\}^M}\in \Aclass$. By the definition of $\Aclass$, suppose that $\mathrm{Var}(f)=\{x_1,x_2,\dots,x_{2d}\}$, then $f|_{\{0,1\}^M}(x_1,x_2,\dots,x_{2d})=\lambda\chi_{AX=0}i^{\ell_1(X)+\dots+\ell_{2d}(X)}$ where $X=(x_1,\dots,x_{2n},1)$ and $AX=0$ is an affine linear system over $\mathbb{F}_2$. Since $\mathcal{T}$ is an affine linear system over $\mathbb{F}_2$, there exist an matrix $B$ over $\mathbb{F}_2$ such that $\mathcal{T}=\{(x_1,x_2,\dots,x_{2d}):BX=0\}$. Hence, $f|_{\mathcal{T}}(x_1,\dots,x_{2d})=f|_{\{0,1\}^M}(x_1,\dots,x_{2d})\chi_{BX=0}=\lambda\chi_{AX=0}\chi_{BX=0}i^{\ell_1(X)+\dots+\ell_{2d}(X)}=\lambda\chi_{CX=0}i^{\ell_1(X)+\dots+\ell_{2d}(X)}\in \Aclass$, where $C=\begin{bsmallmatrix}
        A\\ 
        B
    \end{bsmallmatrix}$.
\end{proof}

\begin{remark}
    We give an example to illustrate that for the case $f\in \eom[\Pclass]$,  why we cannot directly get  $f\mid_\mathcal{T}\in \mathscr{P}$ for any affine set $\mathcal{T}\subseteq \mathcal{S}(f_v)$. Let $f(x_1,y_1,x_2,y_2,x_3,y_3)=\prod_{i=1}^3 u(x_i)\prod_{i=1}^3\neq_2(x_i,y_i)$, where $\neq_2(x_i,x_j)=\chi_{x_i\oplus y_i=1}$ is the binary disequality function, and  $u(x_i)$ is a unary function with $u(0)=1$ and $u(1)=3$. Let $\mathcal{T}=\{(x_1,y_1,x_2,y_2,x_3,y_3):x_1\oplus y_1=1,x_2\oplus y_2=1,x_3\oplus y_3=1,x_1\oplus x_2 \oplus x_3=1\}$, which clearly an affine linear subspace. Then,
    One can check that $f\in \eo^{\Pclass}$ and indeed $f\in \Pclass $, but $f|_{\mathcal{T}}\notin \mathscr{P}$. 
\end{remark}

\begin{theorem}[$\pnp$ versus $\#\mathrm{P}$-hard  classification for $\ceo$~\cite{meng2025fpnp}]\label{thm:MWX-hard}
Let $\mathcal F$ be a set of $\eo$ signatures. If $\mathcal{F}$ satisfies the following two conditions:

\begin{enumerate}
    \item $\mathcal F\subseteq \POLUP$ or $\mathcal F\subseteq \POLDOWN$,
    \item $\mathcal F\subseteq\eom[\Aclass]$ or $\mathcal F\subseteq\eom[\Pclass]$,
\end{enumerate}
then $\ceo(\mathcal F)\in\mathrm{FP}^{\mathrm{NP}}$.  Otherwise,
$\ceo(\mathcal F)$ is $\#\mathrm{P}$-hard.
\end{theorem}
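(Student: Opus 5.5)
The tractability half of \Cref{thm:MWX-hard} should follow from the lemmas already collected, with the NP oracle used only to compute the locally effective support sets. Let $\Omega=(G,\pi)$ be an instance of $\ceo(\mathcal F)$ with $\mathcal F$ satisfying both conditions.

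\textbf{Step 1: compute $\mathcal{E}(f_v)$ with the oracle.} Since $\mathcal F$ is fixed, each $\su(f_v)$ has constant size. For every $v\in V$ and every $\alpha\in\su(f_v)$, I would ask the oracle whether $\Omega$ has an effective Eulerian orientation $\sigma$ with $\sigma|_{E(v)}=\alpha$. This is an NP question: a certificate is $\sigma$ itself, and checking it takes polynomial time. The answers give $\mathcal{E}(f_v)$ exactly, using polynomially many queries.

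\textbf{Step 2: replace each $f_v$ by a tractable restriction.} By condition~1 and \Cref{lem:effective support set is affine}, every $\mathcal{E}(f_v)$ is affine. By condition~2 and \Cref{lem:from affine to pairwise}, applied with $\mathcal T=\mathcal E(f_v)$, there is a set $\mathcal T_v\supseteq\mathcal E(f_v)$ with $f_v|_{\mathcal T_v}$ in $\Pclass$ (respectively $\Aclass$). This set is computable in polynomial time: find a pairing $M_v$ via \Cref{lem:pairwise opposite}, then take $\mathcal T_v=\{0,1\}^{M_v}$. In the $\Aclass$ case one may instead take $\mathcal T_v=\mathcal E(f_v)$ directly.

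\textbf{Step 3: count.} Define $\Omega'$ by $\pi'(v)=f_v|_{\mathcal T_v}$. Then \Cref{lem:ceo value equal to a small support set} gives $\ceo_\Omega=\ceo_{\Omega'}$. Every signature of $\Omega'$ lies in $\Aclass$, or every one lies in $\Pclass$. So \Cref{cor:tractable of EO} evaluates $\ceo_{\Omega'}$ in polynomial time, and the whole procedure is in $\pnp$.

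One subtlety is that the restricted signatures depend on the instance. \Cref{cor:tractable of EO} should therefore be applied in its uniform form: signatures in $\Aclass$ or $\Pclass$ given explicitly by their affine or product descriptions. I expect the standard $\#\mathrm{CSP}$ algorithms to run uniformly in this setting. I would double-check that the $\Pclass$ case is consistent, that is, all restrictions use a common decomposition type; this holds because each restriction is taken to a full pairing set $\{0,1\}^{M_v}$.

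\textbf{Hardness half.} Here I would invoke the machinery of~\cite{meng2025fpnp} rather than reprove it. Suppose some $f\in\mathcal F$ violates both quasi-polymorphism directions, or some $f$ lies outside $\eom[\Aclass]$ and some $g$ lies outside $\eom[\Pclass]$. One then uses gadgets (pinning, merging paired variables, realizing $\neq_2$) to realize either a signature outside $\Aclass\cup\Pclass$ on a pairing, or a six-vertex or arity-$4$ EO signature known to be $\#\mathrm{P}$-hard. Hardness then follows from the $\#\mathrm{CSP}$ dichotomy~\cite{cai2014complexity} and the ARS and six-vertex classifications~\cite{cai2020beyond,cai2018complexity}.

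This gadget analysis, especially the failure of both $\POLUP$ and $\POLDOWN$ without a pairing structure, is the main obstacle. I would follow the existing proof there rather than attempt a new argument.
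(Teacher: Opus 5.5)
Your tractability argument is the paper's own account of the $\pnp$ side: oracle queries decide local effectiveness and so compute $\mathcal{E}(f_v)$, \Cref{lem:effective support set is affine} makes it affine, \Cref{lem:from affine to pairwise} enlarges it to a pairing set, and \Cref{lem:ceo value equal to a small support set} with \Cref{cor:tractable of EO} finishes. Like the paper, you take the hardness half from \cite{meng2025fpnp}, but your hardness sketch misstates the failure of condition~1: it means $\mathcal F\not\subseteq\POLUP$ and $\mathcal F\not\subseteq\POLDOWN$, which can be witnessed by two different signatures (say $f\in\POLUP\setminus\POLDOWN$ and $g\in\POLDOWN\setminus\POLUP$) rather than by one signature violating both directions, and the cited hardness result covers this mixed case as well.
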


We briefly review the proof idea for the $\mathrm{FP}^{\mathrm{NP}}$ side and explain why an NP oracle is needed.

Consider an instance $\Omega=(G,\pi)$ of $\ceo(\mathcal{F})$, where $G=(V,E)$ and $\pi(v)=f_v\in\mathcal{F}$. Since either $\mathcal{F}\subseteq\POLUP$ or $\mathcal{F}\subseteq\POLDOWN$, \Cref{lem:effective support set is affine} implies that $\mathcal{E}(f_v)$ is affine for every $v\in V$. Moreover, since either $\mathcal{F}\subseteq\eom[\Aclass]$ or $\mathcal{F}\subseteq\eom[\Pclass]$, \Cref{lem:from affine to pairwise} implies that, for every $v\in V$, there exists a set $\mathcal{T}'_v$ satisfying
$\mathcal{E}(f_v)\subseteq\mathcal{T}'_v\subseteq\mathcal{S}(f_v)$
such that either $f_v\mid_{\mathcal{T}'_v}\in\mathscr{A}$ for all $v\in V$, or $f_v\mid_{\mathcal{T}'_v}\in\mathscr{P}$ for all $v\in V$.

Now consider the instance $\Omega'=(G,\pi')$, where either
$\pi'(v)=f_v\mid_{\mathcal{T}'_v}\in\mathscr{A}$
for all $v\in V$, or $\pi'(v)\in\mathscr{P}$ for all $v\in V$. By~\Cref{lem:ceo value equal to a small support set}, we have
$\ceo_{\Omega}=\ceo_{\Omega'}.$
Clearly, the value of $\ceo_{\Omega'}$ can be computed in polynomial time.
However, the difficulty is that computing $\mathcal{E}(f_v)$ is not known to be polynomial-time solvable. In~\cite{meng2025fpnp}, $\mathcal{E}(f_v)$ is obtained by deciding, for each $\alpha\in\mathcal{S}(f_v)$, whether $\alpha$ is locally effective. This decision problem is in NP, which is why an NP oracle is needed.

\section{An LP-based  algorithm}

In this section, we introduce our linear programming (LP)-based algorithm. We show how we use LP as a structural tool for lifting the quasi-polymorphism condition to an ordinary polymorphism condition.
Remarkably, our algorithm does not compute $\mathcal{E}(f_v)$ and, in fact, has little to do with $\mathcal{E}(f_v)$, although $\mathcal{E}(f_v)$ was proved to be affine in~\cite{meng2025fpnp}. Instead, we give a polynomial-time algorithm to compute a possibly larger set $\mathcal{L}(f_v)$, called the LP-feasible support set (\Cref{def:LP feasible support set}) satisfying
$\mathcal{E}(f_v)\subseteq \mathcal{L}(f_v)\subseteq \mathcal{S}(f_v).$
We further prove that $\mathcal{L}(f_v)$ is affine by analyzing linear-programming constraints. This gives tractability.

Now, we define a linear system for an   instance $\Omega=(G,\pi)$ of $\ceo(\mathcal F)$ where $\mathcal{F}$ satisfies the two conditions in~\cref{thm:MWX-hard}.
We give some notations.
Suppose $G=(V, E)$. For every $v\in V$ with degree $2d_v$, let $f_v \in \mathcal{F}$ of arity $2d_v$ be the signature labeled on $v$, i.e., $\pi(v)=f_v$.
Let $e\in E$ be an edge incident to $v$ and $\alpha \in \su(f_v)$ be a support vector of $f_v$.
Recall that $\alpha(e)$ denotes the $0$-$1$ value of the support string $\alpha$ with respect to the input value corresponding to the edge $e$.

Now, we give the following linear system.

\begin{equation}\label{eq:linear-system}
\begin{aligned}
\textbf{Variables:}\quad
& \lambda_{v,\alpha} \geq 0
&& \text{for all } v\in V \text{ and }\alpha\in\su(f_v), \\[0.6ex]
\textbf{Vertex constraints:}\quad
& \sum_{\alpha\in\su(f_v)}\lambda_{v,\alpha}=1
&& \text{for every } v\in V, \\[0.6ex]
\textbf{Edge constraints:}\quad
& \sum_{\alpha\in\su(f_u)}\alpha(e)\,\lambda_{u,\alpha}
  +\sum_{\alpha\in\su(f_v)}\alpha(e)\,\lambda_{v,\alpha}=1
&& \text{for every } e=(u,v)\in E .
\end{aligned}
\end{equation}

We use $\boldsymbol \lambda=(\lambda_{v,\alpha})_{\alpha\in\su(f_v),\, v\in V}$ to denote the vector of variables, $\mathsf{L}(\Omega)$ to denote the above system, and $\mathsf{P}(\Omega)$ to denote the feasible region of  $\mathsf{L}(\Omega)$, which is a polytope. 
We give the following two key properties of $\mathsf{L}(\Omega)$.
First, $\mathsf{P}(\Omega)$ is a linear relaxation of the solution space in terms of truth-table indicator vectors (Definition~\ref{def:tt-indicator}) for the corresponding decision $\eo$ problem.
\begin{lemma}\label{lem:eff is solution}
    If $\boldsymbol \lambda^\sigma=(\lambda^\sigma_{v,\alpha})_{\alpha\in\su(f_v),\, v\in V}\in \{0, 1\}^{\sum_{v\in V}m_v}$ is an effective truth-table indicator of $\Omega$, which corresponds to an effective Eulerian orientation $\sigma$ of $\Omega$,
    then $\boldsymbol \lambda^\sigma \in\mathsf{P}(\Omega)$.
\end{lemma}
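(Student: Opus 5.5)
The plan is to check the three families of constraints of $\mathsf{L}(\Omega)$ directly for $\boldsymbol\lambda^\sigma$, using only \Cref{def:tt-indicator} and the encoding of orientations.

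\textbf{Nonnegativity and well-definedness.} Since $\sigma$ is effective, $\prod_{v\in V} f_v(\sigma|_{E(v)})\neq 0$. Hence $f_v(\sigma|_{E(v)})\neq 0$ for every $v$, so $\sigma|_{E(v)}\in\su(f_v)$. This means that for each $v$ exactly one coordinate $\lambda^\sigma_{v,\alpha}$, namely the one with $\alpha=\sigma|_{E(v)}$, equals $1$, and all other coordinates at $v$ equal $0$. In particular every entry is $0$ or $1$, which gives nonnegativity.

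\textbf{Vertex constraints.} By the previous step, for each $v\in V$ we have $\sum_{\alpha\in\su(f_v)}\lambda^\sigma_{v,\alpha}=1$. The only point needing care is that the support string $\sigma|_{E(v)}$ actually appears among the indexed coordinates, and this is exactly what effectiveness guarantees.

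\textbf{Edge constraints.} Fix $e=(u,v)\in E$. Since only the coordinate with $\alpha=\sigma|_{E(u)}$ survives at $u$,
\[
\sum_{\alpha\in\su(f_u)}\alpha(e)\,\lambda^\sigma_{u,\alpha}=\sigma|_{E(u)}(e),
\]
which is the value $\sigma$ assigns to the endpoint $u$ of $e$. The analogous identity holds at $v$. Under the orientation encoding, the edge $e$ is directed one way, so one endpoint receives $1$ (the tail) and the other receives $0$ (the head). Therefore the two values sum to $1$.

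Since all constraints hold, $\boldsymbol\lambda^\sigma\in\mathsf{P}(\Omega)$. There is no real obstacle here. The only subtle point is the first step, where effectiveness is used to ensure the restriction $\sigma|_{E(v)}$ lies in $\su(f_v)$, so that the indicator vector has a $1$ at each vertex. Note that the Eulerian property itself is not needed for this inclusion.
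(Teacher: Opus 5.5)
Your proof is correct and follows the same route as the paper. Like the paper, you check nonnegativity, get the vertex constraints from the fact that exactly one coordinate per vertex equals $1$, and get the edge constraints from the fact that the two endpoints of each edge receive complementary values. Your explicit use of effectiveness to guarantee $\sigma|_{E(v)}\in\su(f_v)$ is a slightly more careful version of what the paper uses implicitly, and your remark that the Eulerian property is not separately needed is also right.
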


\begin{proof}
    Obviously, $\boldsymbol \lambda^\sigma$ is a vector of variables of $\mathsf{L}(\Omega)$ and $\boldsymbol \lambda^\sigma\geq 0$. By \Cref{def:tt-indicator}, for every vertex $v\in V$, there is only one string $\alpha\in \su(f_v)$ such that $\lambda^\sigma_{v,\alpha}=1$, and $\lambda^\sigma_{v,\beta}=0$ for all other $\beta\in\su(f_v)$ where $\beta\neq \alpha$. 
    Thus, $\boldsymbol \lambda^\sigma$ satisfies the vertex constraints of $\mathsf{L}(\Omega)$. 

   Since $\sigma$ is an effective Eulerian orientation of $\Omega$, for every edge $e=(u,v)\in E$, $\sigma$ assigns a value $1$ to one endpoint and a value $0$ to the other endpoint. 
Then,    
    for every $e=(u,v)\in E$, we have $\sum_{\alpha\in\su(f_u)}\alpha(e)\,\lambda^\sigma_{u,\alpha}
  +\sum_{\alpha\in\su(f_v)}\alpha(e)\,\lambda^\sigma_{v,\alpha}=\sigma|_{E(v)}(e)+\sigma|_{E(u)}(e)=1$. Thus, $\boldsymbol \lambda^\sigma$ also satisfies the edge constraints of $\mathsf{L}(\Omega)$. 
  Therefore,
 $\boldsymbol \lambda^\sigma\in \mathsf{P}(\Omega)$.
\end{proof}

\begin{definition}[LP feasible support set, $\mathcal{L}(f_v)$]\label{def:LP feasible support set}

  Given an $\ceo$ instance $\Omega=(G, \pi)$ where $G=(V, E)$ and $\pi(v)=f_v$, 
    we say an input  $\alpha\in \su(f_v)$ is LP feasible for $f_v$ if $\max_{\boldsymbol \lambda\in \mathsf{P}(\Omega)}\lambda_{v,\alpha}>0$.
      We use $\mathcal{L}(f_v)$ to denote the set of all LP feasible inputs for $f_v$, namely the \emph{LP feasible support set}. 
\end{definition}

\begin{lemma}
    \label{cor:LinEinS}
    $\mathcal{E}(f_v)\subseteq \mathcal{L}(f_v)\subseteq \su(f_v).$
\end{lemma}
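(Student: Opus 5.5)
The statement consists of two inclusions, and I will prove them separately. Both follow directly from the definitions together with \Cref{lem:eff is solution}.

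For the second inclusion, $\mathcal{L}(f_v)\subseteq \su(f_v)$, there is nothing to do beyond unpacking \Cref{def:LP feasible support set}. The variables $\lambda_{v,\alpha}$ of $\mathsf{L}(\Omega)$ are indexed only by $\alpha\in\su(f_v)$. Hence LP feasibility is defined only for inputs $\alpha\in\su(f_v)$, and $\mathcal{L}(f_v)$ is by definition a subset of $\su(f_v)$.

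For the first inclusion, $\mathcal{E}(f_v)\subseteq \mathcal{L}(f_v)$:
\begin{enumerate}
\item Take $\alpha\in\mathcal{E}(f_v)$. By the definition of locally effective strings, there is an effective Eulerian orientation $\sigma\in\mathrm{EO}(G)$ of $\Omega$ with $\sigma|_{E(v)}=\alpha$. In particular $f_v(\alpha)\neq 0$, so $\alpha\in\su(f_v)$ and the variable $\lambda_{v,\alpha}$ exists.
\item Form the effective truth-table indicator $\boldsymbol\lambda^\sigma$ of \Cref{def:tt-indicator}. By construction, $\lambda^\sigma_{v,\alpha}=1$.
\item By \Cref{lem:eff is solution}, $\boldsymbol\lambda^\sigma\in\mathsf{P}(\Omega)$. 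Therefore
\[
\max_{\boldsymbol\lambda\in\mathsf{P}(\Omega)}\lambda_{v,\alpha}\ \ge\ \lambda^\sigma_{v,\alpha}=1>0 .
\]
\end{enumerate}
This shows that $\alpha$ is LP feasible for $f_v$, so $\alpha\in\mathcal{L}(f_v)$.

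The maximum in \Cref{def:LP feasible support set} is well defined because $\mathsf{P}(\Omega)$ is a polytope: it is bounded, since all variables are nonnegative and each vertex constraint makes them sum to $1$. When $\mathsf{P}(\Omega)$ is nonempty, the maximum is therefore attained. When no effective orientation exists, $\mathcal{E}(f_v)=\emptyset$ and the first inclusion holds trivially. There is no real obstacle here; the only point requiring care is noting that $\alpha\in\su(f_v)$, so that the variable $\lambda_{v,\alpha}$ is actually present in the system.
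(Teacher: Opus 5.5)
Your proposal is correct and follows the paper's own proof: build the effective truth-table indicator $\boldsymbol\lambda^\sigma$, apply \Cref{lem:eff is solution} to place it in $\mathsf{P}(\Omega)$ with $\lambda^\sigma_{v,\alpha}=1$, and get $\mathcal{L}(f_v)\subseteq\su(f_v)$ directly from the definition. Your extra remarks, that the maximum is attained and that $\alpha\in\su(f_v)$, are accurate additions to the same argument.
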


\begin{proof}
    By the definition of $\mathcal{E}(f_v)$, for every vertex $v\in V$ and each locally effective string $\beta\in \mathcal{E}(f_v)$, there exists an effective truth-table indicator $\boldsymbol \lambda^\sigma=(\lambda^\sigma_{v,\alpha})_{\alpha\in\su(f_v),\, v\in V}\in \{0, 1\}^{\sum_{v\in V}m_v}$ of $\Omega$ such that $\lambda^\sigma_{v,\beta}=1>0$. And by \cref{lem:eff is solution}, we have $\boldsymbol \lambda^\sigma\in \mathsf{P}(\Omega)$, so that $\max_{\boldsymbol \lambda\in \mathsf{P}(\Omega)}\lambda_{v,\beta}\geq \lambda^\sigma_{v,\beta}=1>0$
    and we can get $\beta\in \mathcal{L}(f_v)$. Then $\mathcal{E}(f_v)\subseteq \mathcal{L}(f_v)$. By the definition of $\mathcal{L}(f_v)$,
    we have $\mathcal{L}(f_v)\subseteq\su(f_v)$. Finally, $\mathcal{E}(f_v)\subseteq \mathcal{L}(f_v)\subseteq \su(f_v).$
\end{proof}

Second, for every vertex $v\in V$, the LP feasible support set $\mathcal{L}(f_v)$ determined by $\mathsf{L}(\Omega)$ is an affine linear subspace of 
$\mathbb{F}_2^{2d_v}$.

\begin{theorem}[$\mathcal{L}(f_v)$ is affine]\label{the:lifted to affine}
   Let $\Omega=(G, \pi)$ with $G=(V, E)$ and $\pi(v)=f_v$ be an instance of  $\ceo(\mathcal{F})$.  
 If $\mathcal{F}\subseteq \mathsf{Pol}^\uparrow (\oplus_3)$ or $\mathcal{F}\subseteq \mathsf{Pol}^\downarrow (\oplus_3)$,
    then for every $v\in V$, the LP feasible support set $\mathcal{L}(f_v)$ is affine.

\end{theorem}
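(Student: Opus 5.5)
The plan is to show directly that $\mathcal{L}(f_v)$ is closed under $\oplus_3$. Fix $v$ and $\alpha,\beta,\gamma\in\mathcal{L}(f_v)$; we must show $\alpha\oplus\beta\oplus\gamma\in\mathcal{L}(f_v)$. If $\mathsf{P}(\Omega)=\emptyset$, every $\mathcal{L}(f_v)$ is empty and there is nothing to prove, so assume $\mathsf{P}(\Omega)\neq\emptyset$.

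\textbf{Step 1: a strictly positive feasible point.} For each pair $(u,\delta)$ with $\delta\in\mathcal{L}(f_u)$, pick some $\boldsymbol\lambda^{(u,\delta)}\in\mathsf{P}(\Omega)$ with $\lambda^{(u,\delta)}_{u,\delta}>0$. Let $\boldsymbol\lambda^*$ be the uniform average of these points. Since $\mathsf{P}(\Omega)$ is convex, $\boldsymbol\lambda^*\in\mathsf{P}(\Omega)$. Moreover $\lambda^*_{u,\delta}>0$ exactly for $\delta\in\mathcal{L}(f_u)$, for every $u\in V$. For each vertex $u$, $\lambda^*_{u,\cdot}$ is a probability distribution on $\su(f_u)$. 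Its edge marginals $p_u(e)=\sum_\delta\delta(e)\lambda^*_{u,\delta}$ satisfy $p_u(e)+p_w(e)=1$ for every edge $e=(u,w)$.

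\textbf{Step 2: the cubic (XOR) lift.} For each $u$, define a distribution $\mu_u$ on $\{0,1\}^{2d_u}$ as the law of $\delta_1\oplus\delta_2\oplus\delta_3$, where $\delta_1,\delta_2,\delta_3$ are i.i.d.\ samples from $\lambda^*_{u,\cdot}$. The bit on edge $e$ of this XOR equals $1$ with probability
\[
q_u(e)=\tfrac12\bigl(1-(1-2p_u(e))^3\bigr).
\]
For an edge $e=(u,w)$ we have $1-2p_w(e)=-(1-2p_u(e))$, and since the cube is an odd function, $q_u(e)+q_w(e)=1$. So the edge constraints are preserved by this cubic expression. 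This is exactly where the ternary XOR interacts with the linear system.

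\textbf{Step 3: a weight count forces $\mu_u$ onto the support.} Assume $\mathcal{F}\subseteq\POLUP$; the down case is symmetric with inequalities reversed.

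1. Every string in the support of $\mu_u$ has the form $\delta_1\oplus\delta_2\oplus\delta_3$ with $\delta_i\in\su(f_u)$. By the up-polymorphism property it lies in $\su(f_u)\cup\eosg$, so its weight is at least $d_u$, with equality iff it lies in $\su(f_u)$.
2. Summing expected weights over all vertices gives
\[
\sum_u\mathbb{E}_{\mu_u}[\wt]=\sum_u\sum_{e\ni u}q_u(e)=\sum_{e\in E}1=|E|=\sum_u d_u.
\]
3. Comparing with item 1 termwise, equality must hold everywhere. Hence every $\mu_u$ is supported on $\su(f_u)$.

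Therefore the vector $(\mu_u(\delta))_{u,\delta\in\su(f_u)}$ satisfies the nonnegativity and vertex constraints. By Step 2, its edge marginals are the $q_u$, which satisfy the edge constraints. So it lies in $\mathsf{P}(\Omega)$.

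\textbf{Step 4: conclusion.} At the fixed vertex $v$,
\[
\mu_v(\alpha\oplus\beta\oplus\gamma)\ge\lambda^*_{v,\alpha}\lambda^*_{v,\beta}\lambda^*_{v,\gamma}>0 .
\]
Hence $\alpha\oplus\beta\oplus\gamma\in\mathcal{L}(f_v)$. So $\mathcal{L}(f_v)$ is closed under $\oplus_3$, and a nonempty subset of $\mathbb{F}_2^{2d_v}$ closed under $\oplus_3$ is an affine subspace.

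The step I expect to be the crux is Step 3. The cubic map must preserve the edge constraints exactly, which is why the odd symmetry $p\mapsto 1-p$ is essential. The global weight count then has to absorb the one-sided slack permitted by the quasi-polymorphism. The rest is bookkeeping.
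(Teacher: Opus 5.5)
Your proof is correct and is essentially the paper's argument. Your $\mu_u$ is exactly the paper's vector $\boldsymbol\mu$, built from the cubic products of $\boldsymbol\lambda^*$. Your weight count $\sum_u \mathbb{E}_{\mu_u}[\wt]=\sum_u d_u$ is the paper's identity $\sum_v \Psi_v(\boldsymbol\lambda^*)=0$ in probabilistic form, via $q_u(e)=\tfrac12\bigl(1+p_{u,e}^3(\boldsymbol\lambda^*)\bigr)$ and $\Psi_u(\boldsymbol\lambda^*)=2\bigl(\mathbb{E}_{\mu_u}[\wt]-d_u\bigr)$, where $p_{u,e}$ is the paper's $\pm1$ marginal. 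Averaging over all LP-feasible pairs rather than just the three witnesses for $\alpha,\beta,\gamma$ is harmless but unnecessary.
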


We give a proof of \Cref{the:lifted to affine} in~\Cref{sec:main proof}. With~\cref{the:lifted to affine} in hand, we are ready to give our algorithm.

\refstepcounter{myalgorithm}
\noindent\textbf{Algorithm~\themyalgorithm. LP-based algorithm.}
\label{alg:lp-feasibility}

Given an instance of $\ceo(\mathcal F)$ where $\mathcal F$ satisfies the
tractability conditions of \cref{thm:MWX-hard}.

\begin{algo}
\textsc{Input:} An instance $\Omega=(G,\pi)$ where $\pi(v)=f_v$ for every $v\in V(G)$\\
\textsc{Output:} The evaluation $\ceo_{\Omega}$\\[.25em]
\begin{tabular*}{\linewidth}{@{}r@{\hspace{.8em}}p{.70\linewidth}@{\extracolsep{\fill}}l@{}}
\algline{1}{\kw{if} $\mathsf{P}(\Omega)=\emptyset$ \kw{then} \kw{return} $0$}{\algcmt{feasibility check}}
\algline{2}{\kw{for each} vertex $v\in V(G)$ \kw{do}}{\algcmt{find $\mathcal{L}(f_v)$}}
\algline{3}{\algin \kw{for each} $\alpha\in \su(f_v)$ \kw{do}}{}
\algline{4}{\algin \algin Solve the linear programming problem $\max_{\boldsymbol\lambda\in \mathsf{P}(\Omega)}\lambda_{v,\alpha}$}{}
\algline{5}{\algin Determine $\mathcal{L}(f_v)$}{}
\algline{6}{\algin Determine $\mathcal{T}_v'$ such that $\mathcal{L}(f_v)\subseteq\mathcal{T}_v'$ and $f_v|_{\mathcal{T}_v'}\in \Pclass $ or $\Aclass$}{}
\algline{7}{Compute the value $\ceo_{\Omega'}$ where $\Omega'=(G,\pi')$ with $\pi'(v)=f_v|_{\mathcal{T}_v'}$}{\algcmt{tractable instance}}
\algline{8}{\kw{return} the value $\ceo_{\Omega'}$}{}
\end{tabular*}
\end{algo}

\begin{proposition}\label{prop:algorithm-correctness}
    The algorithm output is correct. The algorithm runs in time ${\rm poly}(n,s)$ where $n$ is the number of vertices, and $s=\max_{v\in V} |\su(f_v)|$.
\end{proposition}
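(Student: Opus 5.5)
We prove correctness and the running time separately, taking \Cref{the:lifted to affine} as given.

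\textbf{Correctness.} If $\mathsf{P}(\Omega)=\emptyset$, then by \Cref{lem:eff is solution} no effective truth-table indicator exists, so there is no effective Eulerian orientation. Every term of $\ceo_\Omega$ therefore vanishes, and returning $0$ is correct.

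Otherwise, lines 2--5 compute $\mathcal{L}(f_v)$ exactly as in \Cref{def:LP feasible support set}. Since $\mathcal{F}\subseteq\POLUP$ or $\mathcal{F}\subseteq\POLDOWN$, \Cref{the:lifted to affine} shows that $\mathcal{L}(f_v)$ is affine. It is also contained in $\su(f_v)$ by \Cref{cor:LinEinS}.

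Since $\mathcal{F}\subseteq\eom[\Aclass]$ or $\mathcal{F}\subseteq\eom[\Pclass]$, \Cref{lem:from affine to pairwise} applies with $\mathcal{T}=\mathcal{L}(f_v)$. It gives $\mathcal{T}'_v\supseteq\mathcal{L}(f_v)$ with $f_v|_{\mathcal{T}'_v}$ in $\Pclass$, resp.\ $\Aclass$, uniformly over all $v$. \Cref{cor:LinEinS} gives $\mathcal{E}(f_v)\subseteq\mathcal{L}(f_v)\subseteq\mathcal{T}'_v$, so \Cref{lem:ceo value equal to a small support set} yields $\ceo_\Omega=\ceo_{\Omega'}$.

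Finally, all signatures of $\Omega'$ lie in $\Aclass$, or all lie in $\Pclass$. They are also EO signatures, since restricting a support keeps it inside $\eoe$. \Cref{cor:tractable of EO} therefore lets us compute $\ceo_{\Omega'}$.

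\textbf{Running time.} The LP has at most $ns$ variables. It has $n$ vertex constraints and $|E|$ edge constraints, with $|E|\le\sum_v d_v$. All coefficients are $0/1$.

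We may assume the input lists each $\su(f_v)$ explicitly, so the input size is at least $\sum_v 2d_v\,|\su(f_v)|$. If the arity exceeds $s$ this quantity still bounds the size, so we count running time in terms of input size, which is polynomial in $n$, $s$ and the arities. The feasibility check and each of the at most $ns$ maximizations are then polynomial-time by the ellipsoid or interior point method.

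Line 6 is where the construction must be made explicit. From the list $\mathcal{L}(f_v)$ we find the pairing $M$ in polynomial time by \Cref{lem:pairwise opposite}, applied to $f_v|_{\mathcal{L}(f_v)}$, whose support is affine. We then set $\mathcal{T}'_v=\su(f_v)\cap\{0,1\}^M$, which is exactly the restriction used in \Cref{lem:from affine to pairwise}; this set can be listed directly. In the $\Aclass$ case we may instead take $\mathcal{T}'_v=\mathcal{L}(f_v)$.

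Line 7 requires the tractable algorithm of \Cref{cor:tractable of EO} to run from the explicit truth tables. This holds because the standard $\Aclass$ and $\Pclass$ algorithms first recover the affine/product representation from a support list. The main obstacle is making these representation-conversion steps polynomial in $s$ and the arity. I expect this to be routine: Gaussian elimination on the support list, and decomposition into unary and binary factors for $\Pclass$.
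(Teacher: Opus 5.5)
Your proposal is correct and follows essentially the same route as the paper. Correctness comes from \Cref{cor:LinEinS}, \Cref{the:lifted to affine}, \Cref{lem:from affine to pairwise} and \Cref{lem:ceo value equal to a small support set}, and the running time comes from counting the $O(ns)$ LPs and invoking \Cref{cor:tractable of EO}. Your extra details fill in steps the paper leaves implicit without changing the argument: the explicit $\mathsf{P}(\Omega)=\emptyset$ case, the concrete choice $\mathcal{T}'_v=\su(f_v)\cap\{0,1\}^M$, bounding $|E|$ by degrees rather than $O(n^2)$, and the remark on converting representations.
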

\begin{proof}
    By~\cref{lem:ceo value equal to a small support set}, we can get $\ceo_{\Omega}=\ceo_{\Omega'}.$ 
    So the algorithm outputs the correct value.

    We analyze the time complexity.
    The linear system $\mathsf{L}(\Omega)$ has at most $O(n\cdot s)$ many variables, and $O(n^2)$ many constraint. 
    We need to solve the LP problem $\max_{\boldsymbol\lambda\in \mathsf{P}}\lambda_{v, \alpha}$ for every vertex $v\in V(G)$ and every $\alpha\in \mathcal{S}(f_v)$. 
    So, at most $O(n\cdot s)$ many LP problems. 
    Thus, we can compute all $\mathcal{L}(f_v)$ in $\mathrm{poly}(n,s)$  time. Moreover, $\mathcal T_v'$, namely the perfect pairing described in \Cref{lem:from affine to pairwise}, can be found in  $\mathrm{poly}(n)$ time.
    By~\cref{the:lifted to affine} and~\cref{lem:from affine to pairwise}, $\Omega'$ is an instance
    of $\ceo(\Aclass)$ or $\ceo(\Pclass)$.
    By~\cref{cor:tractable of EO}, we can compute $\ceo_{\Omega'}$ in  $\mathrm{poly}(n)$ time. Thus, the total running time of the algorithm is $\mathrm{poly}(n,s).$
\end{proof}

\begin{proof}[Proof of \Cref{thm:main}]
If the two stated conditions hold, then \cref{alg:lp-feasibility} computes every
instance of $\ceo(\mathcal F)$ in polynomial time by
\Cref{prop:algorithm-correctness}. If at least one condition fails, then
$\#\mathrm{P}$-hardness follows from
\Cref{thm:MWX-hard}.
\end{proof}

\section{Exploiting the quasi-polymorphism condition in a cubic form}\label{sec:main proof}

 We give the following reformulation of the above linear system $\mathsf{L}(\Omega)$. 
For a 0-1 bit string $\alpha=\alpha_1\alpha_2\ldots\alpha_r\in \{0, 1\}^r$, 
let $\widehat{\alpha}_i=2\alpha_i-1\in \{-1, 1\}$, and $$\Delta(\alpha)=\sum_{i=1}^r \widehat{\alpha}_i=\sum_{i=1}^r(2\alpha_i -1)=2\wt(\alpha)-r.$$ 
Note that $\Delta(\alpha)=0$ if $\alpha\in \eoe$, $\Delta(\alpha)>0$ if $\alpha\in \eosg$, and $\Delta(\alpha)<0$ if $\alpha\in \eosl$. 
In particular, let $f_v$ be an $\eo$ signature of arity $2d$ assigned on some vertex $v$, and let $E(v)$ be the set of edges incident to $v$, which correspond to the variables of $f_v$.
For a support string $\alpha\in \mathcal{S}(f_v)$, we have  $\Delta(\alpha)=0$ since $\alpha\in \eoe$.

We say $\widehat\alpha=\widehat\alpha_1\widehat\alpha_2\ldots\widehat\alpha_{r}\in \{-1, 1\}^{r}$  is the $\pm 1$ sign representation of $\alpha$. 
Under the representation, one can check that $\alpha \oplus \beta  \oplus \gamma=\delta$ is equivalent to $\widehat{\alpha} \cdot \widehat{\beta}\cdot \widehat{\gamma}=\widehat{\delta}$ where $\cdot$ refers to the bitwise product.
Thus, for any $\alpha, \beta, \gamma \in \{0, 1\}^r$, we have $\Delta(\alpha\oplus\beta\oplus\gamma)=\sum_{i=1}^r \widehat\alpha_i\cdot\widehat\beta_i\cdot \widehat\gamma_i.$
This equality will serve as the bridge connecting the quasi-polymorphism condition with our linear system.

Now, we reformulate the edge constraints in $\mathsf{L}(\Omega
)$ in terms of the sign representation as follows.

\begin{equation}\label{eq:sign edge constraint}
\begin{aligned}
&\sum_{\alpha\in\su(f_u)}\widehat\alpha(e)\,\lambda_{u,\alpha}+\sum_{\alpha\in\su(f_v)}\widehat\alpha(e)\,\lambda_{v,\alpha}\\
  =& \sum_{\alpha\in\su(f_u)}(2\alpha(e)-1)\,\lambda_{u,\alpha}
  +\sum_{\alpha\in\su(f_v)}(2\alpha(e)-1)\,\lambda_{v,\alpha}\\
  =& 2\left(\sum_{\alpha\in\su(f_u)}\alpha(e)\,\lambda_{u,\alpha}+\sum_{\alpha\in\su(f_v)}\alpha(e)\,\lambda_{v,\alpha}\right)-\sum_{\alpha\in\su(f_u)}\,\lambda_{u,\alpha}-\sum_{\alpha\in\su(f_v)}\,\lambda_{v,\alpha}\\
  =& 2-1-1=0.
\end{aligned}
\end{equation}

  For simplicity, we define 
  $p_{v,e}(\boldsymbol\lambda)
  \;:=\;
\sum_{\alpha\in\su(f_v)}\widehat\alpha(e)\,\lambda_{v,\alpha},$ and the edge constraints are
\begin{equation}\label{eq:simplicity edge constraint}
    p_{u,e}(\boldsymbol\lambda)+p_{v,e}(\boldsymbol\lambda)=0 \quad \text{for every } e=(u,v)\in E.
\end{equation}
Also, we define  $\Psi_v(\boldsymbol \lambda)
    :=\sum_{e\in E(v)} p_{v,e}^3(\boldsymbol{\lambda})$.

A key, albeit somewhat unnatural, idea in our proof is to lift the edge constraint
$p_{u,e}(\boldsymbol{\lambda})+p_{v,e}(\boldsymbol{\lambda})=0$
to its cubic form
$p_{u,e}^3(\boldsymbol{\lambda})+p_{v,e}^3(\boldsymbol{\lambda})=0.$
By carefully analyzing all cubic terms in
\(\Psi_v(\boldsymbol{\lambda})\), we establish~\Cref{the:lifted to affine}.

\begin{lemma}\label{cor:Psi>=0}
  Suppose that $\boldsymbol{\lambda}\in \mathsf{P}(\Omega)$ for some instance $\Omega=(G, \pi)$ of $\ceo(\mathcal{F})$ where $G=(V, E)$ and $\pi(v)=f_v\in \mathcal{F}$. 
  Then,
    $$\Psi_v(\boldsymbol \lambda)=\sum_{\alpha,\beta,\gamma\in\su(f_v)}\lambda_{v,\alpha}\lambda_{v,\beta}\lambda_{v,\gamma}\Delta(\alpha\oplus \beta \oplus \gamma).$$
   Moreover, for all $v\in V$ and $\alpha, \beta, \gamma \in \su(f_v)$, we have $\Psi_v(\boldsymbol \lambda)\geq 0$ and $\lambda_{v,\alpha}\lambda_{v,\beta}\lambda_{v,\gamma}\Delta(\alpha\oplus \beta \oplus \gamma)\geq 0$  if $\mathcal{F}\subseteq \POLUP$;  $\Psi_v(\boldsymbol \lambda)\leq 0$ and $\lambda_{v,\alpha}\lambda_{v,\beta}\lambda_{v,\gamma}\Delta(\alpha\oplus \beta \oplus \gamma)\leq 0$  if $\mathcal{F}\subseteq \POLDOWN$.
\end{lemma}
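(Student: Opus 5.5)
The identity is proved by expanding the cube of each linear form $p_{v,e}$ and swapping the order of summation; the sign claims then follow termwise from the quasi-polymorphism condition. Only the vertex constraint and nonnegativity of $\boldsymbol\lambda$ are used. The edge constraints play no role in this lemma; they become relevant later, when the cubic forms $\Psi_v$ are summed over vertices.

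\textbf{Step 1: the identity.} Fix $v$. By definition $p_{v,e}(\boldsymbol\lambda)=\sum_{\alpha\in\su(f_v)}\widehat\alpha(e)\lambda_{v,\alpha}$. Cubing gives
\[
p_{v,e}^3(\boldsymbol\lambda)=\sum_{\alpha,\beta,\gamma\in\su(f_v)}\lambda_{v,\alpha}\lambda_{v,\beta}\lambda_{v,\gamma}\,\widehat\alpha(e)\widehat\beta(e)\widehat\gamma(e).
\]
Summing over $e\in E(v)$ and exchanging the finite sums, the inner sum becomes $\sum_{e\in E(v)}\widehat\alpha(e)\widehat\beta(e)\widehat\gamma(e)$. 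The edges of $E(v)$ are exactly the $2d_v$ coordinates of $f_v$. By the bridge identity $\Delta(\alpha\oplus\beta\oplus\gamma)=\sum_i\widehat\alpha_i\widehat\beta_i\widehat\gamma_i$ stated just before the lemma, this inner sum equals $\Delta(\alpha\oplus\beta\oplus\gamma)$. That yields the claimed formula for $\Psi_v(\boldsymbol\lambda)$.

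\textbf{Step 2: termwise signs.} Take $\alpha,\beta,\gamma\in\su(f_v)$. Since $\boldsymbol\lambda\ge 0$, the factor $\lambda_{v,\alpha}\lambda_{v,\beta}\lambda_{v,\gamma}$ is nonnegative, so the sign of each term is governed by $\Delta(\alpha\oplus\beta\oplus\gamma)$.
\begin{enumerate}
\item If $\mathcal F\subseteq\POLUP$, then $\alpha\oplus\beta\oplus\gamma\in\su(f_v)\cup\eosg$. On $\su(f_v)\subseteq\eoe$ we have $\Delta=0$, and on $\eosg$ we have $\Delta>0$. Hence $\Delta(\alpha\oplus\beta\oplus\gamma)\ge 0$, and every term is $\ge 0$.
\item If $\mathcal F\subseteq\POLDOWN$, the same reasoning with $\eosl$ gives $\Delta(\alpha\oplus\beta\oplus\gamma)\le 0$, and every term is $\le 0$.
\end{enumerate}
Summing the terms gives $\Psi_v(\boldsymbol\lambda)\ge 0$, respectively $\le 0$.

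\textbf{Expected obstacle.} There is no real obstacle here. The only care needed is bookkeeping: the coordinates of $\alpha$ must be indexed by $E(v)$ in the order fixed by $\pi$, so that the sum over $e\in E(v)$ is exactly the coordinate sum defining $\Delta$.
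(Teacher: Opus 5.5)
Your proof is correct and matches the paper's argument: expand each $p_{v,e}^3$, exchange the finite sums, apply the bridge identity $\sum_{e\in E(v)}\widehat\alpha(e)\widehat\beta(e)\widehat\gamma(e)=\Delta(\alpha\oplus\beta\oplus\gamma)$, and read off the termwise signs from the up/down-polymorphism condition together with $\boldsymbol\lambda\ge 0$. One small correction to your opening remark: the vertex constraints are not used either, since the identity is purely algebraic and the sign claims need only nonnegativity of $\boldsymbol\lambda$; this overstatement does not affect the proof.
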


\begin{proof}
    By direct calculation, we have 
    \begin{equation*}
    \begin{aligned}
        \Psi_v(\boldsymbol \lambda)
    =\sum_{e\in E(v)} p_{v,e}^3&=\sum_{e\in E(v)}\left(\sum_{\alpha\in\su(f_v)}\widehat\alpha(e)\,\lambda_{v,\alpha}\right)^3\\
    &=\sum_{e\in E(v)}\left(\sum_{\alpha,\beta,\gamma\in\su(f_v)}\widehat\alpha(e)\,\lambda_{v,\alpha}\widehat\beta(e)\,\lambda_{v,\beta}\widehat\gamma(e)\,\lambda_{v,\gamma}\right)\\
    &=\sum_{\alpha,\beta,\gamma\in\su(f_v)}\lambda_{v,\alpha}\lambda_{v,\beta}\lambda_{v,\gamma}\left(\sum_{e\in E(v)}\widehat\alpha(e)\widehat\beta(e)\widehat\gamma(e)\right)\\
    &=\sum_{\alpha,\beta,\gamma\in\su(f_v)}\lambda_{v,\alpha}\lambda_{v,\beta}\lambda_{v,\gamma}\Delta(\alpha\oplus \beta \oplus \gamma).
    \end{aligned}
\end{equation*}

If $\mathcal{F}\subseteq \POLUP$, then $\Delta(\alpha\oplus\beta\oplus\gamma)\geq 0$ for all $v\in V$ and $\alpha,\beta,\gamma\in \mathcal{S}(f_v)$. Recall that $\lambda_{v,\alpha}\geq0$ for all $v\in V$ and $\alpha\in \mathcal{S}(f_v)$, we have $\lambda_{v,\alpha}\lambda_{v,\beta}\lambda_{v,\gamma}\Delta(\alpha\oplus\beta\oplus\gamma)\geq 0$ and $\Psi_v(\boldsymbol{\lambda})\geq 0$. The case that $\mathcal{F}\subseteq \POLDOWN$ is symmetric.

\end{proof}

Combining~\Cref{cor:Psi>=0} and the edge constraints, we have the following corollary.

\begin{corollary}\label{cor:lambdaDelta=0}
     Suppose that $\boldsymbol{\lambda}\in \mathsf{P}(\Omega)$ for some instance $\Omega=(G, \pi)$ of $\ceo(\mathcal{F})$ where $G=(V, E)$, $\pi(v)=f_v\in \mathcal{F}$.
     If $\mathcal{F}\subseteq \POLUP$ or $\mathcal{F}\subseteq \POLDOWN$, then for all $v\in V$ and all $\alpha, \beta, \gamma \in \su(f_v)$ which are not necessarily distinct, we have $\Psi_v(\boldsymbol \lambda)= 0$ and $\lambda_{v,\alpha}\lambda_{v,\beta}\lambda_{v,\gamma}\Delta(\alpha\oplus \beta \oplus \gamma)= 0$.
\end{corollary}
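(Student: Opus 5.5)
We sum $\Psi_v$ over all vertices and use the edge constraints to show the total is zero. Since every summand has a fixed sign by \Cref{cor:Psi>=0}, each $\Psi_v$ must vanish, and then so must every cubic term inside it.

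First, from the reformulated edge constraints \eqref{eq:simplicity edge constraint}, for every edge $e=(u,v)\in E$ we have $p_{v,e}(\boldsymbol\lambda)=-p_{u,e}(\boldsymbol\lambda)$. Cubing is an odd function, so
\[
p_{u,e}^3(\boldsymbol\lambda)+p_{v,e}^3(\boldsymbol\lambda)=0 .
\]
Next we compute $\sum_{v\in V}\Psi_v(\boldsymbol\lambda)=\sum_{v\in V}\sum_{e\in E(v)}p_{v,e}^3(\boldsymbol\lambda)$. We regroup this double sum by edges: each edge $e=(u,v)$ contributes exactly its two endpoint terms $p_{u,e}^3+p_{v,e}^3$. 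Self-loops, if allowed, would contribute two incidences at the same vertex, but they are still paired the same way. Hence
\[
\sum_{v\in V}\Psi_v(\boldsymbol\lambda)=\sum_{e=(u,v)\in E}\bigl(p_{u,e}^3(\boldsymbol\lambda)+p_{v,e}^3(\boldsymbol\lambda)\bigr)=0 .
\]

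Assume $\mathcal F\subseteq\POLUP$; the case $\mathcal F\subseteq\POLDOWN$ is symmetric with all signs reversed. By \Cref{cor:Psi>=0}, every $\Psi_v(\boldsymbol\lambda)\ge 0$. These are nonnegative numbers summing to $0$, so $\Psi_v(\boldsymbol\lambda)=0$ for every $v$.

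Now apply the cubic expansion in \Cref{cor:Psi>=0}:
\[
\Psi_v(\boldsymbol\lambda)=\sum_{\alpha,\beta,\gamma\in\su(f_v)}\lambda_{v,\alpha}\lambda_{v,\beta}\lambda_{v,\gamma}\Delta(\alpha\oplus\beta\oplus\gamma).
\]
The same lemma says each term $\lambda_{v,\alpha}\lambda_{v,\beta}\lambda_{v,\gamma}\Delta(\alpha\oplus\beta\oplus\gamma)$ is nonnegative. Their sum is $\Psi_v(\boldsymbol\lambda)=0$, so every term is $0$. This holds for all ordered triples, including those with repeated entries, which is why $\alpha,\beta,\gamma$ need not be distinct.

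The only step requiring care is the regrouping of $\sum_v\sum_{e\in E(v)}$ into a sum over edges. Here we must make sure each edge has exactly two endpoint incidences, and that the variables of $f_v$ correspond bijectively to $E(v)$ as fixed by $\pi$. Everything else follows directly from the sign conditions in \Cref{cor:Psi>=0}.
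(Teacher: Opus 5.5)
Your proposal is correct and follows the paper's proof: cube the edge constraints, regroup the edge sum as $\sum_{v\in V}\Psi_v(\boldsymbol\lambda)=0$, use the one-sided sign from \Cref{cor:Psi>=0} to force each $\Psi_v(\boldsymbol\lambda)=0$, and then conclude that every cubic term, being of the same sign, vanishes. Your explicit argument for the last step (nonnegative terms over all ordered triples summing to zero), which the paper covers only with ``similarly,'' is exactly the intended reasoning.
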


\begin{proof}
    We consider the case that $\mathcal{F}\subseteq \POLUP$; the down case is symmetric.
    Summing the cubic form of edge constraints over all edges $e=(u,v)\in E$, we obtain
    \begin{align*}
        0&= \sum_{e=(u,v)\in E}p^3_{u,e}(\boldsymbol\lambda)+p^3_{v,e}(\boldsymbol\lambda)=\sum_{v\in V}\sum_{e\in E(v)}p^3_{v,e}(\boldsymbol\lambda)=\sum_{v\in V} \Psi_v(\boldsymbol \lambda).
    \end{align*}
    By~\Cref{cor:Psi>=0}, for every $v\in V$, we have $\Psi_v(\boldsymbol \lambda)\geq 0$. Thus $\Psi_v(\boldsymbol \lambda)= 0$ for every $v\in V$.
    Similarly, we have $\lambda_{v,\alpha}\lambda_{v,\beta}\lambda_{v,\gamma}\Delta(\alpha\oplus \beta \oplus \gamma)= 0$ for all $v\in V$ and all $\alpha, \beta, \gamma \in \su(f_v)$.
\end{proof}

Below, we will prove $\mathcal{L}(f_v)$ is affine.

\begin{lemma}\label{lem:XOR in S(f_v)}
        Let $\Omega=(G, \pi)$ be an instance of $\ceo(\mathcal{F})$ where $G=(V, E)$, $\pi(v)=f_v\in \mathcal{F}$.
       Suppose that $\mathcal{F}\subseteq \POLUP$ or $\mathcal{F}\subseteq \POLDOWN$.
       If $\alpha, \beta, \gamma \in \mathcal{L}(f_v)$, then $\alpha\oplus \beta \oplus \gamma \in \mathcal{S}(f_v)$.
\end{lemma}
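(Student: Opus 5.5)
The plan is to find a single point of $\mathsf{P}(\Omega)$ that puts positive weight on $\alpha$, $\beta$ and $\gamma$ at once, and then read off the conclusion from \Cref{cor:lambdaDelta=0} combined with the quasi-polymorphism condition.

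\textbf{Step 1: a common feasible point.} Since $\alpha,\beta,\gamma\in\mathcal{L}(f_v)$, there are feasible points $\boldsymbol\lambda^{1},\boldsymbol\lambda^{2},\boldsymbol\lambda^{3}\in\mathsf{P}(\Omega)$ with $\lambda^1_{v,\alpha}>0$, $\lambda^2_{v,\beta}>0$ and $\lambda^3_{v,\gamma}>0$. Because $\mathsf{P}(\Omega)$ is convex, the average
\[
\boldsymbol\lambda=\tfrac13(\boldsymbol\lambda^1+\boldsymbol\lambda^2+\boldsymbol\lambda^3)
\]
also lies in $\mathsf{P}(\Omega)$. All coordinates are nonnegative, so $\lambda_{v,\alpha}$, $\lambda_{v,\beta}$ and $\lambda_{v,\gamma}$ are all strictly positive.

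\textbf{Step 2: apply the corollary.} \Cref{cor:lambdaDelta=0} holds for this $\boldsymbol\lambda$ and for triples that need not be distinct. It gives $\lambda_{v,\alpha}\lambda_{v,\beta}\lambda_{v,\gamma}\Delta(\alpha\oplus\beta\oplus\gamma)=0$. The product of the three $\lambda$'s is positive, so $\Delta(\alpha\oplus\beta\oplus\gamma)=0$, that is, $\alpha\oplus\beta\oplus\gamma\in\eoe$.

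\textbf{Step 3: use the quasi-polymorphism.} Assume $\mathcal{F}\subseteq\POLUP$. Since $\alpha,\beta,\gamma\in\mathcal{L}(f_v)\subseteq\su(f_v)$ by \Cref{cor:LinEinS}, the up-polymorphism gives $\alpha\oplus\beta\oplus\gamma\in\su(f_v)\cup\eosg$. The string lies in $\eoe$ and $\eoe\cap\eosg=\emptyset$, so it must lie in $\su(f_v)$. The case $\mathcal{F}\subseteq\POLDOWN$ is the same with $\eosl$ in place of $\eosg$.

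There is no real obstacle: the substance is already in \Cref{cor:lambdaDelta=0}. The one point needing care is Step 1. Separate witnesses for $\alpha$, $\beta$ and $\gamma$ do not suffice, because the corollary constrains a single $\boldsymbol\lambda$; convexity of the polytope supplies a common witness.
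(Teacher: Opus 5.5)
Your proof is correct and follows the paper's argument essentially step for step. Like the paper, you average the three witnesses to get a single point of $\mathsf{P}(\Omega)$ that is positive on $\alpha,\beta,\gamma$, apply \Cref{cor:lambdaDelta=0} to force $\Delta(\alpha\oplus\beta\oplus\gamma)=0$, and then use the one-sided quasi-polymorphism condition to conclude that $\alpha\oplus\beta\oplus\gamma\in\su(f_v)$.
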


\begin{proof}
  Since $\alpha, \beta, \gamma \in \mathcal{L}(f_v)$,  by the definition
of $\mathcal{L}(f_v)$,  there exist  $\boldsymbol\lambda^{(1)},\boldsymbol\lambda^{(2)},\boldsymbol\lambda^{(3)}\in \mathsf{P}(\Omega)$ such that 
$\lambda^{(1)}_{v,\alpha}>0$, $\lambda^{(2)}_{v,\beta}>0$, $\lambda^{(3)}_{v,\gamma}>0$, respectively. Clearly, their average
$\boldsymbol\lambda^*:=\frac13(\boldsymbol\lambda^{(1)}+\boldsymbol\lambda^{(2)}+\boldsymbol\lambda^{(3)})\in \mathsf{P}(\Omega)$ since $\mathsf{P}(\Omega)$ is a polytope, and $\lambda^*_{v,\alpha},\lambda^*_{v,\beta},\lambda^*_{v,\gamma}>0$. 
By~\Cref{cor:lambdaDelta=0}, $\lambda^*_{v,\alpha}\lambda^*_{v,\beta}\lambda^*_{v,\gamma}\Delta(\alpha\oplus \beta \oplus \gamma)= 0$.
This forces $\Delta(\alpha\oplus \beta \oplus \gamma)= 0$, which implies $\alpha\oplus \beta \oplus \gamma \in \eoe$.
Note that $f_v\in \POLUP$ or $f_v\in \POLDOWN$, in both cases, $\alpha, \beta, \gamma \in \mathcal{L}(f_v)\subseteq \mathcal{S}(f_v)$ and $\alpha\oplus \beta \oplus \gamma \in \eoe$ together imply $\alpha\oplus \beta \oplus \gamma \in\mathcal{S}(f_v)$.
\end{proof}

For the particular signature $f_{56}$, \Cref{lem:XOR in S(f_v)} suffices to prove that $\mathcal{L}(f_v)$ is affine. 

\begin{lemma}[A restatement of \cref{the:lifted to affine} for the signature $f_{56}$]
       Let $\Omega=(G, \pi)$ be an instance of $\ceo(\{f_{56}\})$ where $G=(V, E)$, $\pi(v)=f_v=f_{56}$.
       Then, for every $v\in V$, $|\mathcal{L}(f_v)|\leq 2$ which clearly implies $\mathcal{L}(f_v)$ is affine. 
\end{lemma}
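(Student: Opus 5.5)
The plan is to combine \Cref{lem:XOR in S(f_v)} with the explicit structure of $\su(f_{56})=\{\alpha^1,\ldots,\alpha^5\}$. First observe that $f_{56}\in\POLUP$, which lets us invoke \Cref{lem:XOR in S(f_v)}. For this, take $\alpha,\beta,\gamma\in\su(f_{56})$ and split into cases:
\begin{itemize}[leftmargin=2em]
\item if two of them coincide, say $\alpha=\beta$, then $\alpha\oplus\beta\oplus\gamma=\gamma\in\su(f_{56})$;
\item if they are pairwise distinct, the paper already records $\wt(\alpha^i\oplus\alpha^j\oplus\alpha^k)=30>28$, so the XOR lies in $\eosg$.
\end{itemize}
Hence $f_{56}\in\POLUP$, and \Cref{lem:XOR in S(f_v)} applies to every vertex $v$.

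Next, suppose for contradiction that $|\mathcal{L}(f_v)|\ge 3$, and pick three pairwise distinct elements $\alpha^i,\alpha^j,\alpha^k\in\mathcal{L}(f_v)$. By \Cref{lem:XOR in S(f_v)}, $\alpha^i\oplus\alpha^j\oplus\alpha^k\in\su(f_{56})\subseteq\eoe$, so this XOR has weight $28$. But its weight is $30$, a contradiction. Therefore $|\mathcal{L}(f_v)|\le 2$.

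Finally, any set of size at most $2$ is affine. The empty set is trivially affine. A singleton $\{\alpha\}$ is affine because $\alpha\oplus\alpha\oplus\alpha=\alpha$. For a pair $\{\alpha,\beta\}$, every triple drawn from it has a repeated element, so its XOR equals the remaining element and stays in the set.

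The only point needing care is the weight claim $\wt(\alpha^i\oplus\alpha^j\oplus\alpha^k)=30$, which the paper states without proof. If I were to verify it, I would compute blockwise. In the $H_0$ column the XOR is $0$. In each $H_4$ block, exactly $2$ of the $5$ positions end up as $1$: at positions $i,j,k$ two of the three rows are $1$, giving XOR $0$, and at the other two positions all three rows are $1$, giving XOR $1$. So the $H_4$ blocks contribute $3\cdot 2=6$. In each $H_2$ block, the columns are the $10$ pairs of $[5]$, and the XOR is $1$ exactly when the pair meets $\{i,j,k\}$ in an odd number of elements. That happens for $3\cdot2=6$ pairs, namely those with exactly one element in $\{i,j,k\}$. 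So the $H_2$ blocks contribute $4\cdot 6=24$, and the total is $24+6=30$, as claimed.
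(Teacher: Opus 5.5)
Your proposal is correct and follows the paper's proof: you show $f_{56}\in\POLUP$, apply \Cref{lem:XOR in S(f_v)} to three distinct elements of $\mathcal{L}(f_v)$ to contradict the weight-$30$ fact, and then note that any set of at most two strings is closed under $\oplus_3$. Your blockwise check that $\wt(\alpha^i\oplus\alpha^j\oplus\alpha^k)=30$ is accurate, with $24$ coming from the four $H_2$ blocks and $6$ from the three $H_4$ blocks; it supplies the verification the paper leaves to the reader.
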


\begin{proof}
Consider the support set $\mathcal{S}(f_{56})$.
For any $\alpha, \beta, \gamma \in \mathcal{S}(f_{56})$ where $\alpha, \beta, \gamma$ are all distinct, one can check that
$\wt(\alpha \oplus \beta \oplus \gamma)=30>28$,
 thus $\alpha \oplus \beta \oplus \gamma \notin \mathcal{S}(f_v)$.

Suppose, for contradiction, that $\mathcal{L}(f_v)$ contains three distinct strings $\alpha,\beta,\gamma$. Since $f_{56}\in \POLUP$, by~\cref{lem:XOR in S(f_v)}, we have $\alpha\oplus\beta\oplus\gamma\in \mathcal{S}(f_v)$, contradicting the preceding paragraph. Thus
$|\mathcal{L}(f_v)|\leq 2$. A nonempty subset of $\{0,1\}^{56}$ of size at
most two is closed under the coordinatewise ternary XOR operation, hence affine.
\end{proof}

For general signature sets $\mathcal{F}$ admitting $\oplus_3$ as a quasi-polymorphism, we need to work a bit more.

\begin{theorem}[A restatement of~\Cref{the:lifted to affine}]
    Let $\Omega=(G,\pi)$ be an instance of $\ceo(\mathcal{F})$ where $G=(V,E)$, $\pi(v)=f_v\in \mathcal{F}$. Suppose that $\mathcal{F}\subseteq \POLUP$ or $\mathcal{F}\subseteq \POLDOWN$. Then, for every $v\in V$, $\mathcal{L}(f_v)$ is affine.
\end{theorem}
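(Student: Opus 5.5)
The plan is to show that $\mathcal{L}(f_v)$ is closed under $\oplus_3$. A nonempty subset of $\mathbb{F}_2^{2d_v}$ closed under ternary XOR is affine, and if $\mathcal{L}(f_v)=\emptyset$ the algorithm has already returned $0$. So fix $\alpha,\beta,\gamma\in\mathcal{L}(f_v)$ and put $\delta=\alpha\oplus\beta\oplus\gamma$. We must exhibit some $\boldsymbol\lambda'\in\mathsf{P}(\Omega)$ with $\lambda'_{v,\delta}>0$. By \Cref{lem:XOR in S(f_v)} we already know $\delta\in\su(f_v)$, so $\lambda_{v,\delta}$ is at least a legitimate variable of $\mathsf{L}(\Omega)$.

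As in the proof of \Cref{lem:XOR in S(f_v)}, I first average three feasible points to get $\boldsymbol\lambda^*\in\mathsf{P}(\Omega)$ with $\lambda^*_{v,\alpha},\lambda^*_{v,\beta},\lambda^*_{v,\gamma}>0$. I then read $\lambda^*_{u,\cdot}$ as a probability distribution on $\su(f_u)$ at every vertex $u\in V$, which the vertex constraints allow.

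The new point $\boldsymbol\lambda'$ is obtained by \emph{cubing} at every vertex simultaneously. Let $\lambda'_{u,\cdot}$ be the law of $X\oplus Y\oplus Z$, where $X,Y,Z$ are i.i.d.\ with law $\lambda^*_{u,\cdot}$. Explicitly,
\[
\lambda'_{u,\eta}=\sum_{\substack{\alpha',\beta',\gamma'\in\su(f_u)\\ \alpha'\oplus\beta'\oplus\gamma'=\eta}}\lambda^*_{u,\alpha'}\lambda^*_{u,\beta'}\lambda^*_{u,\gamma'}.
\]
The following steps verify that $\boldsymbol\lambda'\in\mathsf{P}(\Omega)$.
\begin{enumerate}
\item \emph{$\boldsymbol\lambda'$ is supported inside $\su(f_u)$.} Any triple with positive weight consists of strings in the support of $\lambda^*_{u,\cdot}$. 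These strings lie in $\mathcal{L}(f_u)$, since $\boldsymbol\lambda^*\in\mathsf{P}(\Omega)$ itself witnesses this. So \Cref{lem:XOR in S(f_v)}, applied at $u$, places $\alpha'\oplus\beta'\oplus\gamma'$ in $\su(f_u)$.
\item \emph{Vertex constraints.} Nonnegativity is clear, and the total mass at each $u$ is $\bigl(\sum_{\alpha'}\lambda^*_{u,\alpha'}\bigr)^3=1$.
\item \emph{Edge constraints.} Given the vertex constraints, the edge constraints are equivalent to the sign form \eqref{eq:simplicity edge constraint} by the computation in \eqref{eq:sign edge constraint}. By the identity $\widehat{(\alpha'\oplus\beta'\oplus\gamma')}(e)=\widehat\alpha'(e)\widehat\beta'(e)\widehat\gamma'(e)$ and independence of $X,Y,Z$, we get $p_{u,e}(\boldsymbol\lambda')=p_{u,e}(\boldsymbol\lambda^*)^3$. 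Hence for $e=(u,w)$,
\[
p_{u,e}(\boldsymbol\lambda')+p_{w,e}(\boldsymbol\lambda')=p_{u,e}(\boldsymbol\lambda^*)^3+p_{w,e}(\boldsymbol\lambda^*)^3=0,
\]
because $p_{w,e}(\boldsymbol\lambda^*)=-p_{u,e}(\boldsymbol\lambda^*)$. This is exactly the cubic lift of the edge constraint highlighted earlier.
\end{enumerate}
Thus $\boldsymbol\lambda'\in\mathsf{P}(\Omega)$, and $\lambda'_{v,\delta}\ge\lambda^*_{v,\alpha}\lambda^*_{v,\beta}\lambda^*_{v,\gamma}>0$, so $\delta\in\mathcal{L}(f_v)$.

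The main obstacle, and the step that needs the most care, is Step 1. The cubing map only produces a vector of the right shape if every XOR of three support strings at every vertex lands back in $\su(f_u)$. This is precisely where the quasi-polymorphism hypothesis enters, through \Cref{cor:lambdaDelta=0} and \Cref{lem:XOR in S(f_v)}. It is essential to apply that lemma at all vertices $u$, not just at $v$, and to note that the support of any feasible point lies in $\mathcal{L}$. Once this is settled, the remaining steps are routine algebra with the $\pm1$ representation.
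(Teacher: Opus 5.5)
Your proof is correct and follows the paper's argument: you average three witnesses, cube the distribution at every vertex via the XOR convolution, and use $p_{u,e}(\boldsymbol\lambda')=p_{u,e}(\boldsymbol\lambda^*)^3$ together with the cubic form of the edge constraint. The only cosmetic difference is how you show that no mass escapes $\su(f_u)$: you invoke \Cref{lem:XOR in S(f_v)} at each vertex, noting that the support of $\boldsymbol\lambda^*$ lies in $\mathcal{L}$, while the paper applies \Cref{cor:lambdaDelta=0} and the quasi-polymorphism property directly to $\boldsymbol\lambda^*$; these amount to the same thing.
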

\begin{proof}
    We treat the case where $\mathcal{F} \subseteq \POLUP$; the down case is symmetric.
    Now we prove that if $\alpha, \beta, \gamma \in \mathcal{L}(f_v)$, then $\delta:=\alpha\oplus\beta \oplus \gamma \in \mathcal{L}(f_v)$.
    We may assume that $\alpha, \beta, \gamma$ are all distinct. Otherwise, it is trivial.

    By the definition
of $\mathcal{L}(f_v)$,  there exist  $\boldsymbol\lambda^{(1)},\boldsymbol\lambda^{(2)},\boldsymbol\lambda^{(3)}\in \mathsf{P}(\Omega)$ such that 
$\lambda^{(1)}_{v,\alpha}>0$, $\lambda^{(2)}_{v,\beta}>0$, $\lambda^{(3)}_{v,\gamma}>0$, respectively. Still, we consider the average
$\boldsymbol\lambda^*:=\frac13(\boldsymbol\lambda^{(1)}+\boldsymbol\lambda^{(2)}+\boldsymbol\lambda^{(3)})\in \mathsf{P}(\Omega)$, and $\lambda^*_{v,\alpha},\lambda^*_{v,\beta},\lambda^*_{v,\gamma}>0$. 
We will construct a vector $\boldsymbol \mu \in \mathsf{P}(\Omega)$ such that $\mu_{v, \delta}>0$.
This gives $\delta \in \mathcal{L}(f_v)$.

Construct a new vector
$
    \boldsymbol \mu
    =
    (\mu_{v,\zeta})_{v\in V,\ \zeta\in \mathcal{S}(f_v)}
$
by defining each coordinate as follows:
\[
    \mu_{v,\zeta}
    =
    \sum_{\substack{
    \alpha',\beta',\gamma'\in \mathcal{S}(f_v)\\
    \alpha'\oplus\beta'\oplus\gamma'=\zeta
    }}
    \lambda_{v,\alpha'}^{\ast}
    \lambda_{v,\beta'}^{\ast}
    \lambda_{v,\gamma'}^{\ast}.
\]
By the definition of $\mu_{v,\zeta}$, we have $\mu_{v,\zeta}\geq 0$ for every $v\in V$ and every $\zeta\in\mathcal{S}(f_v)$.

Now, we check that $\boldsymbol \mu$ satisfies both the vertex constraints and edge constraints of $\mathsf{L}(\Omega)$, which implies that $\boldsymbol \mu \in \mathsf{P}(\Omega)$.

By~\cref{cor:lambdaDelta=0}, since $\boldsymbol{\lambda}^*\in \mathsf{P}(\Omega)$, for every $v\in V$ and all $\alpha', \beta', \gamma' \in \su(f_v)$, we have
$\lambda^*_{v,\alpha'}\lambda^*_{v,\beta'}\lambda^*_{v,\gamma'}\Delta(\alpha'\oplus \beta' \oplus \gamma')= 0$.
If
$
    \alpha'\oplus\beta'\oplus\gamma'\notin\mathcal{S}(f_v)
$, then $\alpha'\oplus \beta' \oplus \gamma' \in {\rm HW}^>$ since $f_v\in \POLUP$.
Hence, $\Delta(\alpha'\oplus \beta' \oplus \gamma')>0$ and this implies $
    \lambda_{v,\alpha'}^{*}
    \lambda_{v,\beta'}^{*}
    \lambda_{v,\gamma'}^{*}
    =0$.

Therefore, for every $v\in V$, we have
\begin{align*}
    \sum_{\zeta\in \mathcal{S}(f_v)}\mu_{v,\zeta}
    &=
    \sum_{\zeta\in \mathcal{S}(f_v)}
    \sum_{\substack{
    \alpha',\beta',\gamma'\in \mathcal{S}(f_v)\\
    \alpha'\oplus\beta'\oplus\gamma'=\zeta
    }}
    \lambda_{v,\alpha'}^{*}
    \lambda_{v,\beta'}^{*}
    \lambda_{v,\gamma'}^{*}\\
    &=
    \sum_{\alpha',\beta',\gamma'\in \mathcal{S}(f_v)}
    \lambda_{v,\alpha'}^{*}
    \lambda_{v,\beta'}^{*}
    \lambda_{v,\gamma'}^{*}\\
    &=
    \left(\sum_{\alpha'\in\mathcal{S}(f_v)}
    \lambda_{v,\alpha'}^{*}\right)\left(\sum_{\beta'\in\mathcal{S}(f_v)}
    \lambda_{v,\beta'}^{*}\right)\left(\sum_{\gamma'\in\mathcal{S}(f_v)}
    \lambda_{v,\gamma'}^{*}\right)\\
    &=
    1.
\end{align*}
Thus $\boldsymbol\mu$ satisfies all the vertex constraints.

Also, by~\cref{cor:lambdaDelta=0}, for every $v\in V$, if
$
    \alpha'\oplus\beta'\oplus\gamma'\notin\mathcal{S}(f_v)
$
then $
    \lambda_{v,\alpha'}^{*}
    \lambda_{v,\beta'}^{*}
    \lambda_{v,\gamma'}^{*}
    =0$. For every edge $e=(u,v)\in E$, we have
\begin{align*}
    p_{v,e}(\boldsymbol\mu)
    &=
    \sum_{\zeta\in \mathcal{S}(f_v)}
    \widehat{\zeta}(e)\mu_{v,\zeta}\\
    &=
    \sum_{\zeta\in \mathcal{S}(f_v)}
    \widehat{\zeta}(e)
    \sum_{\substack{
    \alpha',\beta',\gamma'\in \mathcal{S}(f_v)\\
    \alpha'\oplus\beta'\oplus\gamma'=\zeta
    }}
    \lambda_{v,\alpha'}^{*}
    \lambda_{v,\beta'}^{*}
    \lambda_{v,\gamma'}^{*}\\
    &=
    \sum_{\alpha',\beta',\gamma'\in \mathcal{S}(f_v)}
    \widehat{\;\alpha'\oplus\beta'\oplus\gamma'\;}(e)
    \lambda_{v,\alpha'}^{*}
    \lambda_{v,\beta'}^{*}
    \lambda_{v,\gamma'}^{*}\\
    &=
    \sum_{\alpha',\beta',\gamma'\in \mathcal{S}(f_v)}
    \widehat{\alpha'}(e)
    \widehat{\beta'}(e)
    \widehat{\gamma'}(e)
    \lambda_{v,\alpha'}^{*}
    \lambda_{v,\beta'}^{*}
    \lambda_{v,\gamma'}^{*}\\
    &=
    \left(
    \sum_{\alpha'\in \mathcal{S}(f_v)}
    \widehat{\alpha'}(e)
    \lambda_{v,\alpha'}^{*}
    \right)
    \left(
    \sum_{\beta'\in \mathcal{S}(f_v)}
    \widehat{\beta'}(e)
    \lambda_{v,\beta'}^{*}
    \right)
    \left(
    \sum_{\gamma'\in \mathcal{S}(f_v)}
    \widehat{\gamma'}(e)
    \lambda_{v,\gamma'}^{*}
    \right)\\
    &=
    p_{v,e}^3(\boldsymbol\lambda^{*}).
\end{align*}
Similarly,
$
    p_{u,e}(\boldsymbol\mu)
    =
    p_{u,e}^3(\boldsymbol\lambda^{*}).
$

Since $\boldsymbol{\lambda}^\ast \in \mathsf{P}(\Omega)$, it satisfies the edge constraint $p_{v,e}(\boldsymbol\lambda^{*})
    +p_{u,e}(\boldsymbol\lambda^{*})=0$ and also its cubic form $p_{v,e}^3(\boldsymbol\lambda^{*})
    +p_{u,e}^3(\boldsymbol\lambda^{*})=0$ for every edge $e=(u, v)$.
Thus, for every edge $e=(u, v)$, we have 
\[
    p_{v,e}(\boldsymbol\mu)+p_{u,e}(\boldsymbol\mu)=p_{v,e}^3(\boldsymbol\lambda^{*})
    +p_{u,e}^3(\boldsymbol\lambda^{*})=0.
\]
So $\boldsymbol\mu$ satisfies all the edge constraints. Therefore, $\boldsymbol\mu$ is a feasible point of $\mathsf{P}(\Omega)$. 

Moreover, since
$
    \mu_{v,\delta}
    \geq
    \lambda_{v,\alpha}^{*}
    \lambda_{v,\beta}^{*}
    \lambda_{v,\gamma}^{*}
    >0,
$
we have $\delta\in \mathcal{L}(f_v)$ by the definition of $\mathcal{L}(f_v)$.
\end{proof}

\appendix
\section{Acknowledgment}
We would like to thank Boning Meng, Zhuxiao Tang, Juqiu Wang, and Mingji Xia for many valuable discussions. 
 During the investigation of the special signature $f_{56}$, we used OpenAI's GPT-5.4 series models as a
  research aid.  The materials are available
  at \url{https://github.com/KeShih/EO}.

\bibliography{main}
\bibliographystyle{alpha}

\end{document}